\documentclass[submission,copyright,creativecommons,sort&compress]{eptcs}
\providecommand{\event}{FROM 2026} 
\usepackage[numbers,sort&compress]{natbib}
\usepackage{iftex}

\ifpdf
  \usepackage{underscore}         
  \usepackage[T1]{fontenc}        
\else
  \usepackage{breakurl}           
\fi

\title{Dependently Typed Model Composition for Matching Logic}

\author{\'{A}d\'{a}m Kurucz
\institute{ELTE E\"{o}tv\"{o}s Lor\'{a}nd University\\
Budapest, Hungary}
\email{cphfw1@inf.elte.hu}
\and
P\'{e}ter Bereczky
\institute{ELTE E\"{o}tv\"{o}s Lor\'{a}nd University\\
Budapest, Hungary}
\email{berpeti@inf.elte.hu}
\and
D\'{a}niel Horp\'{a}csi
\institute{ELTE E\"{o}tv\"{o}s Lor\'{a}nd University\\
Budapest, Hungary}
\email{daniel-h@elte.hu}
}

\def\titlerunning{Dependently Typed Model Composition for Matching Logic}
\def\authorrunning{\'{A}. Kurucz, P. Bereczky \& D. Horp\'{a}csi}

\usepackage{mathtools,amssymb,amsmath}
\usepackage{cleveref}
\usepackage{array,multirow}
\usepackage{ebproof}
\usepackage{amsthm}
\usepackage[page]{appendix}

\usepackage{tikz}
\usetikzlibrary{arrows.meta}
\usetikzlibrary{calc}
\usetikzlibrary{fit}
\usetikzlibrary{shapes.geometric}
\usetikzlibrary{matrix}
\usetikzlibrary{tikzmark}

\NewDocumentCommand{\smallcase}{m m m}{
	\biggl\{
	\begin{array}{@{} l @{\mskip 10mu\relax} l @{}}
		#1 & \text{if } #2 \\
		#3 & \text{otherwise}
	\end{array}
}
\NewDocumentCommand{\tinynegspace}{}{
    \mskip -6mu\relax
}

\ExplSyntaxOn

\cs_new_eq:NN \old_texttt:n \texttt
\cs_set:Npn \texttt #1 {\textup{\old_texttt:n{#1}}}

\prop_const_from_keyval:Nn \preamble_dictionary {
    HList=\texttt{HList},
	partial=\texttt{Partial},
    iso=\texttt{Iso},
	partialf=\textit{to},
	partialg=\textit{from},
	partialfg=\textit{tofrom},
	partialgf=\textit{fromto},
	gluable=\texttt{Combinable},
    option=\texttt{Option},
	subone=\textit{sub}^{\textit{sorts}}\c_math_subscript_token1,
	subtwo=\textit{sub}^{\textit{sorts}}\c_math_subscript_token2,
	subfive=\textit{sub}^{\textit{symb}}\c_math_subscript_token1,
	subsix=\textit{sub}^{\textit{symb}}\c_math_subscript_token2,
	subtotalone=\textit{no_junk}^{\textit{sorts}},
	subtotalthree=\textit{no_junk}^{\textit{symb}},
    symbols=\textit{Symb},
	liftone=\textit{lift},
	lifttwo=\textit{lift},
	unliftone=\textit{unlift},
	unlifttwo=\textit{unlift},
    In=\texttt{In},
    ineq=\textit{in_eq},
    incons=\textit{in_cons},
    Inmap=\textit{In_map},
    Inmapr=\textit{In_map_r},
    SharedSort=\textit{UnitedSorts},
    SharedSymbols=\textit{UnitedSymb},
    agreeone=\textit{EV_agree},
    agreetwo=\textit{SV_agree},
    agreethree=\textit{args_agree},
    agreefour=\textit{ret_agree},
    agreefive=\textit{carrier_agree},
    agreesix=\textit{app_agree},
    glue=\textit{combine},
    glueT=\textit{combineT},
    gluableM=\texttt{Gluable},
    toinj=\textit{to_inj},
    metanat=\ensuremath{\mathbb N},
    metabool=\ensuremath{\mathbb B},
    metapos=\ensuremath{\mathbb N^+},
    metatrue=\textit{true},
    metafalse=\textit{false},
    gnum=\textit{num},
    gcbool=\textit{c_bool},
    gcisEven=\textit{c_isEven},
    gbool=\textit{bool},
    gisEven=\textit{isEven},
    gtwice=\textit{twice},
    gthrice=\textit{thrice},
    gplus=\textit{plus},
    hlistmapmove=\textit{HList_map_move},
    defaultsum=\ensuremath{C^{\textit{def}}_\uplus},
    inlpartial=\ensuremath{P_{\textit{\dict{inl}}}},
    inrpartial=\ensuremath{P_{\textit{\dict{inr}}}},
    deval=\ensuremath{\textit{unlift}^{\textit{eval}}},
    extendable=\texttt{Extendable},
    supplone=\textit{EV_ext},
    suppltwo=\textit{SV_ext},
    supplthree=\textit{args_ext},
    supplfour=\textit{ret_ext},
    List=\texttt{List}
}

\seq_new:N \preamble_appendix

\NewDocumentEnvironment{addtoappendix}{+b}{
    \seq_gput_right:Nn \preamble_appendix {#1}
}{}

\NewDocumentCommand{\renderappendix}{}{
    \crefalias{section}{appendix}
    \renewcommand{\appendixpagename}{Appendix}
    \clearpage
    \begin{appendices}
    \seq_use:Nn \preamble_appendix {}
    \end{appendices}
}

\tl_const:Nn \ld {.\ }

\NewDocumentCommand{\la}{m}{
	\lambda\:\clist_use:nn {#1} {\ } \ld
}

\NewDocumentCommand{\fa}{m}{
	\forall\:\clist_use:nn {#1} {\ } \ld
}

\NewDocumentCommand{\te}{m}{
	\exists\:\clist_use:nn {#1} {\ } \ld
}

\NewDocumentCommand{\dict}{m}{
	\prop_get:NnNTF \preamble_dictionary {#1} \l_tmpa_tl {\l_tmpa_tl} {\textit{#1}}
}

\NewDocumentCommand{\app}{m m}{
	#1 \  \clist_use:nn {#2} {\ }
}

\NewDocumentCommand{\sephetlist}{m}{
	\langle \clist_use:nn {#1} {,\ } \rangle
}

\tl_const:Nn \breakrow {\\ &\qquad}

\NewDocumentCommand{\typedef}{O{inductive} m o m}{
	\noindent\begin{tabular}{>{\(}l<{\)} @{\;} >{\(}l<{\)} @{}}
		\multicolumn{2}{@{} >{\(}l<{\)} }{\textbf{#1}\ #2\IfValueT{#3}{: #3}}
		\clist_map_inline:nn {#4} {\\ \mskip 8mu\relax \use_i:nn ##1 &: \use_ii:nn ##1}
	\end{tabular}
}

\NewDocumentCommand{\listof}{m}{
    \app{\dict{List}}{#1}
}

\NewDocumentCommand{\setnobreak}{}{
    \bool_set_true:N \l_tmpb_bool
}

\NewDocumentCommand{\fundef}{s t! m m m}{
    \begin{gather*}
        \begin{aligned}
            #3 &: #4
        \end{aligned} \\
        \begin{\IfBooleanTF{#2}{gathered}{aligned}}
            \bool_set_true:N \l_tmpa_bool
            \clist_map_inline:nn {#5} {
                \bool_if:NF \l_tmpa_bool {
                    \IfBooleanTF{#2}{
                        \bool_if:NTF \l_tmpb_bool {\qquad\qquad} {\\}
                    }{\\}
                }
                \bool_set_false:N \l_tmpa_bool
                \bool_set_false:N \l_tmpb_bool
                \IfBooleanTF{#1}{
                    \app{\use_i:nnnn ##1}{(\app{#3}{\use_ii:nnnn ##1}), \use_iii:nnnn ##1} \IfBooleanF{#2}{&}\coloneqq \use_iv:nnnn ##1
                }{
                    \app{#3}{\use_i:nn ##1} \IfBooleanF{#2}{&}\coloneqq \use_ii:nn ##1
                }
            }
        \end{\IfBooleanTF{#2}{gathered}{aligned}}
    \end{gather*}
}

\NewDocumentCommand{\map}{m m}{\app{\dict{map}}{#1, #2}}

\NewDocumentCommand{\case}{m o m s o m}{
    \app{\dict{case}}{#1, \IfValueTF{#2}{(\la {#2} #3)}{#3}\IfBooleanT{#4}{\breakrow}, \IfValueTF{#5}{(\la {#5} #6)}{#6}}
}

\NewDocumentCommand{\sigsortedclosedpattern}{m m m m}{\app{\dict{Pattern}}{#1, #3, #4, #2}}

\NewDocumentCommand{\eqby}{m}{=(\textit{by~}#1)}

\NewDocumentCommand{\DPair}{m m m}{
    \{#2 : #1 \:\&\: #3\}
}

\NewDocumentEnvironment{proofsketch}{o +b}{
    \noindent\textit{Proof~sketch.}\ \ \ \ #2
}{}

\NewDocumentCommand{\topset}{m}{\top\c_math_subscript_token{#1}}

\NewDocumentCommand{\othercase}{m m m o}{
\begin{cases}
    #1 & \text{if~} #2 \\
    #3 & \IfValueTF{#4}{\text{if~} #4}{\text{otherwise}}
\end{cases}
}

\tl_const:Nn \unprovedlemma {proposition}

\NewDocumentCommand{\enumcons}{m}{
    \clist_use:nn {#1} {\mid}
}

\NewDocumentCommand{\mapthese}{m}{
    \bool_set_true:N \l_tmpa_bool
    \{ \clist_map_inline:nn {#1} {\bool_if:NF \l_tmpa_bool {,\ } \bool_set_false:N \l_tmpa_bool \use_i:nn ##1 \mapsto \use_ii:nn ##1} \}
}

\ExplSyntaxOff

\newcommand{\lfp}{\textbf{lfp}}

\newcommand{\bool}{\textit{bool}}
\newcommand{\Coq}{Rocq}

\newcommand{\argsorts}{params}
\newcommand{\retsorts}{return}
\newcommand{\placeholder}{\square}

\NewDocumentCommand{\sepend}{m m m s}{#3\IfBooleanTF{#4}{#2}{#1\sepend{#1}{#2}}}
\newcommand{\separator}[1]{\sepend{#1}{}}
\newcommand{\polyapp}[1]{#1(\sepend{,\ }{)}}
\newcommand{\sepcomma}{\separator{,\ }}
\newcommand{\seplist}{[\sepend{,\ }{]}}

\newcommand{\ofsortvar}[2]{{#1}:{#2}}
\newcommand{\ofsort}[2]{#1_{#2}}
\newcommand{\update}[3]{#1[#2 \mapsto #3]}

\newcommand{\subst}[3]{#1[#3/#2]}
\NewDocumentCommand{\defined}{m o O{#2}}{\lceil #1 \rceil \IfValueT{#2}{_{#2}^{#3}}}
\NewDocumentCommand{\total}{m o O{#2}}{\lfloor #1 \rfloor \IfValueT{#2}{_{#2}^{#3}}}
\newcommand{\bevar}[1]{\underline{#1}}
\newcommand{\bsvar}[1]{\underline{\underline{#1}}}
\newcommand{\fevar}[1]{\widehat{#1}}
\newcommand{\fsvar}[1]{\widehat{\widehat{#1}}}

\newcommand{\syapp}[2]{#1 \cdot #2}

\newcommand{\closedpattern}[2]{\app {\texttt{Pattern}}{#1, #2}}
\newcommand{\sortedclosedpattern}[3]{\app {\texttt{Pattern}}{#2, #3, #1}}

\NewDocumentCommand{\EVar}{o}{\IfValueTF{#1}{\ofsort{EV}{#1}}{EV}}
\NewDocumentCommand{\SVar}{o}{\IfValueTF{#1}{\ofsort{SV}{#1}}{SV}}

\newcommand{\fst}[1]{#1.1}
\newcommand{\snd}[1]{#1.2}
\NewDocumentCommand{\FV}{o m}{FV\IfValueT{#1}{_{#1}}(#2)}
\newcommand{\conscc}{\mathbin{:\!:}}
\newcommand{\fmap}{\mathbin{<\mspace{-6mu}\$\mspace{-6mu}>}}
\newcommand{\hlistmap}[2]{#1 \fmap #2}

\newtheorem{lemma}{Lemma}

\newtheorem{proposition}{Proposition}
\newtheorem{definition}{Definition}
\newtheorem*{remark*}{Remark}

\begin{document}
\maketitle

\begin{abstract}
This paper investigates model composition---often referred to as ``gluing''---within the framework of matching logic. Specifically, we examine the systematic combination of existing signatures, variable valuations, theories, and their corresponding models. Our primary objective is to ensure that this composition preserves satisfaction proofs; that is, any theory validated by the individual constituent models is also validated by the resulting composite model. Our definitions are based on a polyadic, sorted variant of matching logic, which has also been expressed in the \Coq{} proof assistant. Therefore, we outline our definitions with dependent types for this work to serve as a direct blueprint for the mechanization in the short-term future.
\end{abstract}

\section{Introduction}\label{sec:intro}

Matching logic provides the theoretical foundation of the $\mathbb{K}$ semantic framework~\cite{Kfw}. Every programming language defined in $\mathbb{K}$ can be automatically translated into a corresponding matching logic theory. A fundamental property of any such theory is consistency: it must be guaranteed that the axioms defining the language theory do not introduce contradictions. In the model-theoretic approach, consistency can be established by showing that the theory admits a satisfying model, since any theory with a model is necessarily consistent.

In practice, matching logic theories specified in $\mathbb{K}$ are constructed by composing multiple so-called modules. Different sublanguages of a programming language are defined in separate modules, each corresponding to a matching logic theory, and these modules can be imported into one another. Common language components, such as standard containers (e.g., lists and maps) and standard syntactic categories (e.g., identifiers, booleans, and integers), are typically provided as reusable standard modules and incorporated into language definitions through $\mathbb{K}$'s import mechanism. The consistency of theories constructed this way can be verified compositionally: once the consistency of theories generated from standard modules has been established, the corresponding arguments can be reused when verifying larger theories that depend on these modules. From a model-theoretic perspective, this amounts to constructing a satisfying model for a composite theory from satisfying models of its constituent theories.

The primary motivation for this work is the verification of consistency for such $\mathbb{K}$ theories. Although the Kore intermediate language used in the $\mathbb{K}$ framework (and in our \Coq{} embedding~\cite{kore-ml}) is richer than matching $\mu$-logic, we restrict our presentation to the original matching $\mu$-logic formalism~\cite{chen2019mu}, as it captures all the essential challenges involved in compositional consistency verification.

In this paper, we define gluing and briefly discuss extension operations (inspired by the disjoint union) for compatible matching logic signatures, theories, and models, ensuring that shared components remain consistent. The gluing process combines these theories and models into a unified domain while preserving their satisfaction proofs, thereby establishing the consistency of the resulting theories. We express our results in a variant of type theory providing a direct blueprint for implementation in \Coq{}~\cite{coq} based on our existing formalization~\cite{kore-ml}, enabling high-assurance verification of theory consistency.

Contrary to related work~\cite{LTLMC} (based on an applicative, unsorted variant of matching logic~\cite{chen2021explained}), we use an intrinsically sorted variant of matching logic endowed with polyadic symbol applications, which significantly reframes the challenges of this process. Due to inherent sortedness, the logic does not need to be extended to accommodate sorted quantification. Furthermore, combined with polyadic applications, it also simplifies the interpretation of symbols by eliminating ``junk'' values resulting from partial or ill-sorted applications. Finally, since the implementation~\cite{kore-ml} this work is based on treats definedness and injections as built-in syntactical constructs, rather than theories included in nearly every model, they require no special treatment, and are therefore omitted from this presentation for clarity and brevity.

\section{Matching logic}\label{sec:background}

In this section, we recall matching logic and its implementation with dependent types. For a longer discussion on dependent types we refer to \Cref{sec:depintro}.

\begin{addtoappendix}
\section{A brief introduction to dependent types}\label{sec:depintro}

Let us introduce the metatheory used throughout the paper by presenting several common data type definitions in it. We formalize our framework in a variant of the Calculus of Inductive Constructions (CIC)---the dependent type theory that serves as the foundation for the \Coq{} proof assistant—featuring native inductive definitions. To streamline the presentation, the predicative universe hierarchy is left implicit, allowing $\dict{Type}$ to represent types at any level~\cite{pfenning-mohring}.\footnote{We do not rely on the inconsistent type-in-type axiom, as the universe levels are still inferred and enforced by \Coq{}~\cite{coqrefman} in the formalization.}

First, we present three well-known types: lists, union (also called sum), and option types.

\medskip\hfill
\begin{minipage}{0.35\textwidth}
\noindent
\typedef[inductive]{\app{\texttt{List}}{A}}[\dict{Type}]{
    {\textit{nil}}{\app {\texttt{List}}{A}},
    {\textit{cons}}{A \to \app{\texttt{List}}{A} \to \app{\texttt{List}}{A}}
}  
\end{minipage}\hfill
\begin{minipage}{0.3\textwidth}
\noindent
\typedef[inductive]{\app{\texttt{Sum}}{A,B}}[\dict{Type}]{
    {\textit{inl}}{A \to \app {\texttt{Sum}}{A, B}},
    {\textit{inr}}{B \to \app {\texttt{Sum}}{A, B}}
}    
\end{minipage}\hfill
\begin{minipage}{0.3\textwidth}
\noindent
\typedef[inductive]{\app{\dict{option}}{A}}[\dict{Type}]{
    {\dict{Some}}{A \to \app {\dict{option}}{A}},
    {\dict{None}}{\app {\dict{option}}{A}}
}    
\end{minipage}\hfill
\medskip

Lists are either empty or non-empty. We use $[]$ for \textit{nil} (i.e. the empty list) and the infix notation $\conscc$ for \textit{cons}, which prepends an element to a list. We use the standard notation $[e_1, \dots, e_n]$ to denote finite lists with their elements, and $\map{f}{\textit{xs}}$ to denote the application of function $f$ to all elements of the list \textit{xs}.

Elements of the union type (also denoted by $\uplus$) are tagged, \textit{inl} is used to inject elements of the first type into the union, while \textit{inr} is used to tag elements of the second type. The \texttt{Option} type extends a base type $A$ by wrapping its values in the \dict{Some} constructor and introducing an explicit error value, \dict{None}. We continue by presenting vectors (length-indexed lists) and dependent pairs.

\medskip\hfill
\begin{minipage}{0.5\textwidth}
\noindent
\typedef[inductive]{\app {\texttt{Vec}}{A}}[\mathbb N \to \dict{Type}]{
	{\textit{vnil}}{\app {\texttt{Vec}}{A, 0}},
	{\textit{vcons}}{\fa {n} A \to \app{\texttt{Vec}}{A, n} \to \app {\texttt{Vec}}{A, (1 + n)}}
}
\end{minipage}\hfill
\begin{minipage}{0.4\textwidth}
\noindent
\typedef[inductive]{\app {\texttt{SigT}}{A, (P : A \to \dict{Type})}}[\dict{Type}]{
	{\textit{existT}}{\fa{x} \app {P}{x} \to \app {\texttt{SigT}}{A, P}}
}
\end{minipage}\hfill
\medskip

Vectors closely resemble lists, only their type differs: the vector's length (a natural number) appears explicitly as a type index of the $\texttt{Vec}$ type. Consequently, the length index of an empty vector is zero, whereas the length index of a non-empty vector is the successor of its tail's length.

Dependent pairs are a generalization of pairs: they have a single constructor with two arguments, but the second argument can be defined in terms of the first one. We reuse the usual notation of pairs to denote dependent pairs, and use the notation $\DPair{A}{x}{\app{P}{x}}$ to denote the type $\app{\texttt{sigT}}{A, P}$. For example, $\app{\textit{existT}}{0,\textit{vnil}}$ is denoted by $(0, \textit{vnil})$ and $\app{\textit{existT}}{1,(\app{\textit{vcons}}{\textit{true}, \textit{vnil}})}$ by $(1, \app{\textit{vcons}}{\textit{true}, \textit{vnil}})$ both of which are of type $\DPair{\mathbb{N}}{x}{\app{\texttt{Vec}}{\mathbb{B}, x}}$, where the type index of the boolean-valued vector in the second component must be equal to the number specified in the first component. Furthermore, we use $\fst{x}$ and $\snd{x}$ to denote the first and second components for values $x$ of both pairs and dependent pairs.

Inductive types can also be used to express judgments, for example, list membership.

\medskip\hfill
\begin{minipage}{0.6\textwidth}
    \noindent
    \typedef{\app{\dict{In}}{A}}[A \to \listof{A} \to \dict{Type}]{
        {\dict{ineq}}{\fa {x, xs} \app{\dict{In}}{A, x, (x \conscc xs)}},
        {\dict{incons}}{\fa {x, y, xs} \app{\dict{In}}{A, y, xs} \to \app{\dict{In}}{A, y, (x \conscc xs)}}
    }
\end{minipage}\hfill
\medskip

The two constructors establish membership within a non-empty list by asserting that an element is either the head or resides within the tail. Crucially, elements of type \texttt{In} represent more than structural proofs of membership; they additionally serve as indices. Specifically, the number of nested \dict{incons} constructors corresponds directly to the included element's precise index. For this reason, we utilize them as de Bruijn indices in \Cref{sec:dependent}.

Next, we define the type of heterogeneous lists, which can contain elements of varying types. These lists are utilized in \Cref{sec:dependent}.

\medskip
\noindent\hfill
\begin{minipage}{0.6\textwidth}
\typedef[inductive]{\app{\texttt{HList}}{A, (F : A \to \dict{Type})}}[\listof{A} \to \dict{Type}]{
    {\textit{hnil}}{\app {\texttt{HList}}{A, F, []}},
    {\textit{hcons}}{\fa {x, \textit{xs}} \app{F}{x} \to \app {\texttt{HList}}{A, F, \textit{xs}} \to \app {\texttt{HList}}{A, F, (x \conscc \textit{xs})}}
}    
\end{minipage}\hfill
\medskip

Heterogeneous lists are indexed by a type, a list, and a mapping function that assigns types to the index list. Heterogeneous lists are of the same length as their index list, and the type of their $i$th element should be the result of applying the mapping function to the $i$th element of the index list. For example, supposing that $A$ is the boolean type in the metatheory, and the mapping function $F$ assigns $\mathbb{N}$ to \textit{true} and $\app{\texttt{List}}{\mathbb{N}}$ to \textit{false}, the following example is a heterogeneous list with the specified type:
\begin{equation*}
	\app{\textit{hcons}}{100,(\app{\textit{hcons}}{\seplist123*,\textit{hnil}})} : \app{\texttt{HList}}{\mathbb{B},F,\seplist{\textit{true}}{\textit{false}}*}
\end{equation*}

We leave the parameters $x$ and $xs$ (of \textit{hcons}) implicit, since they are inferable from the tails and the type of the list. For readability, we use the notation $\sephetlist{e_1, \ldots, e_n}$ to denote finite heterogeneous lists. Furthermore, we use $\hlistmap{f}{\textit{xs}}$ to denote the application of function $f$ to all elements of the heterogeneous list $\textit{xs}$. We omit the formal definition of this function, and rather refer to~\cite{kurucz2025on} for further details.

Finally, we also utilize record syntax, which allows us to refer to parameters of a (single-constructor) inductive type's constructor as fields in the following way:

\medskip\hfill
\begin{minipage}{0.3\textwidth}
\typedef[record]{\texttt{RecType}}[\dict{Type}]{
	{\textit{field}_1}{A},
	{\ldots}{\ldots},
	{\textit{field}_n}{Z}
}
\end{minipage}
\begin{minipage}{0.49\textwidth}
\typedef[inductive]{\texttt{RecType}}[\dict{Type}]{
	{\textit{mkRecType}}{A \to \ldots \to Z \to \texttt{RecType}}
}
\end{minipage}\hfill
\medskip

With record syntax, we apply field names to records to access the parameters of the constructor, for example, if $\textit{rec} : \texttt{RecType}$, then $\app {\textit{field}_1} {\textit{rec}}$ denotes the first field of $\textit{rec}$, which is of type $A$.
\end{addtoappendix}

Matching logic~\cite{chen2019mu} provides the theoretical foundation for the $\mathbb{K}$  framework. While there are multiple variants of this logic (e.g., applicative matching logic~\cite{chen2021explained}), we present our results based on our previous work~\cite{kurucz2025on} (utilizing a variant of matching $\mu$-logic~\cite{chen2019mu}), and use locally nameless representation~\cite{chargueraud2011locally} which closely follows in the footsteps of the machine-checked implementation~\cite{kore-ml}. This choice eliminates the need to reason about $\alpha$-equivalence, and makes it simpler to mechanize capture-avoiding substitutions.

First, we summarize the syntax of the logic, starting with the definition of signatures.

\begin{definition}[Signature]\label{def:sig}
	A matching logic signature is a tuple $(\sepcomma {\dict{Sorts}} {\EVar} {\SVar} {\Sigma} *)$, where
	\begin{itemize}
		\item $\dict{Sorts}$ is a set of sorts (denoted with $\sepcomma {s} {s'} {\ldots} *$);
		\item $\EVar = \{\EVar[s]\}_{s \in \dict{Sorts}}$ is a countably infinite set of element variables, indexed by sorts such as $s$ (denoted with $\sepcomma {\ofsortvar{x}{s}} {\ofsortvar{y}{s}} {\ldots} *$);
		\item $\SVar = \{\SVar[s]\}_{s \in \dict{Sorts}} $ is a countably infinite set of set variables, indexed by sorts such as $s$ (denoted with $\sepcomma {\ofsortvar{X}{s}} {\ofsortvar{Y}{s}} {\ldots} *$);
		\item $\Sigma = \{\Sigma_{\sepcomma {s_1} {\dots} {s_n} s *}\}_{\sepcomma {s_1} {\dots} {s_n} s * \in \dict{Sorts}}$, where $s_i$ are the parameter sorts and $s$ is the return sort, is a countable set of many-sorted symbols (denoted with $\sepcomma {\sigma} {\sigma'} {\ldots} *$).
	\end{itemize}

	When the details are not relevant, we use $\Sigma$ to refer to the entire signature.
\end{definition}

Next, we present the syntax of matching logic formulas (so-called \emph{patterns}) which fill the role of both terms and formulas of first-order logic.

\begin{definition}[Pattern]\label{def:pattern}
	Given a signature $\Sigma$, we define patterns (denoted by $\sepcomma {\varphi_s} {\psi_s} *$, where the subscript indicates the sort of the pattern) with the following syntax:
	\begin{equation*}
		\ofsort{\varphi}{s} \coloneqq
		\ofsortvar{\fevar x}{s} \mid
		\ofsortvar{\fsvar X}{s} \mid
		\ofsortvar{\bevar n}{s} \mid
		\ofsortvar{\bsvar N}{s} \mid
		\ofsort{\lnot\varphi}{s} \mid
		\ofsort{\varphi}{s} \land \ofsort{\varphi'}{s} \mid
		\exists_{s'} \ld \ofsort{\varphi}{s} \mid
		\mu \ld \ofsort{\varphi}{s} \mid
		\polyapp {\sigma} {\varphi_{s_1}} {\ldots} {\varphi_{s_n}} * \quad \text{if } \sigma \in \Sigma_{\sepcomma {s_1} {\dots} {s_n} s *}
	\end{equation*}

	These are: free element variable, free set variable, bound element variable, bound set variable, negation, conjunction, existential quantifier, least fixed point quantifier, and pattern application. Negation binds the tightest, then conjunction. The scope of quantifiers extends as far to the right as possible. The sort $s'$ after the existential quantifier denotes the sort of the newly introduced bound variable.
\end{definition}

We omit the sorts from variables, where it is inferable. As mentioned before, we utilize a locally nameless representation, which uses names for free variables and de Bruijn indices for bound variables. These de Bruijn indices indicate the exact number of nested quantifiers ($\exists$ for element variables, $\mu$ for set variables) between the variable's occurrence and the specific quantifier that binds it. As an example, supposing that we have two symbols \textit{isZero} and \textit{plus}, and sorts \textit{nat} and \textit{bool}, we present a pattern expressing that there are two numbers whose sum is zero with both named and (locally) nameless variables.
\begin{equation}\label{ex:lnpat}
    \exists \ofsortvar{x}{\textit{nat}} \ld \polyapp {\textit{isZero}} {\exists \ofsortvar{y}{\textit{nat}} \ld \polyapp{\textit{plus}} {\fevar y} {\fevar x} *} * 
    \equiv
    \exists_{\textit{nat}} \ld \polyapp {\textit{isZero}} {\exists_{\textit{nat}} \ld \polyapp{\textit{plus}} {\bevar 0} {\bevar 1} *} *
\end{equation}

We call de Bruijn indices dangling if they do not have a corresponding binder. We use $\FV[s']{\ofsort{\varphi}{s}}$ to denote the free variables of sort $s'$ of pattern $\ofsort{\varphi}{s}$. Furthermore, we use $\Gamma$ to denote a set of matching logic patterns (we also call this set a theory). Next, we introduce capture avoiding substitutions (for any kind of variable), and present an example to demonstrate it in the locally nameless setting.

\begin{definition}[Substitution]\label{def:sub}
	We denote capture-avoiding, many-sorted substitutions as $\subst{\ofsort{\varphi}{s}}{\ofsortvar{x}{s'}}{\ofsort{\psi}{s'}}$, meaning a replacement of every occurrence of $x$ in $\ofsort{\varphi}{s}$ with $\ofsort{\psi}{s'}$. For the complete definition, we refer to the formalization~\cite{kore-ml}. Note that the variable $x$ may be any of the four types of variables (free or bound, element or set variable). In the case of bound variables, only dangling ones can be substituted, and we take into account the quantifiers' ability to increase the index. E.g., $\subst{(\exists_s \ld \ofsortvar{\bevar 0}{s} \land \ofsortvar{\bevar 1}{s})}{\ofsortvar{\bevar 0}{s}}{\psi} = \exists_s \ld \subst{(\ofsortvar{\bevar 0}{s} \land \ofsortvar{\bevar 1}{s})}{\ofsortvar{\bevar 1}{s}}{\psi} = \exists_s \ld \subst{(\ofsortvar{\bevar 0}{s})}{\ofsortvar{\bevar 1}{s}}{\psi} \land \subst{(\ofsortvar{\bevar 1}{s})}{\ofsortvar{\bevar 1}{s}}{\psi} = \exists_s \ld \ofsortvar{\bevar 0}{s} \land \psi$.
\end{definition}

After revisiting the syntax of matching logic, we also outline models and the semantics.

\begin{definition}[Model]\label{def-model}
	A ($\Sigma$-)model is a tuple $M = (\sepcomma {\{M_s\}_{s \in \dict{Sorts}}} {\{\sigma_M\}_{\sigma \in \Sigma}} *)$, where
\begin{itemize}
	\item $M_s$ is the non-empty carrier set (domain) for each sort; 
	\item $\sigma_M : M_{s_1} \times \dots \times M_{s_n} \to \mathcal{P}(M_s)$ serves as the interpretation of the symbol $\sigma \in \Sigma_{\sepcomma {s_1} {\dots} {s_n} s *}$.
\end{itemize}
\end{definition}

Symbol interpretation can be extended to handle model subsets in the following way.

\begin{equation}\label{eq-appext}\tag{\textsc{AppExt}}
	\polyapp {\sigma_M} {A_1} {\dots} {A_n} * \coloneqq \bigcup_{a_i \in A_i} \polyapp {\sigma_M} {a_1} {\dots} {a_n} *
\end{equation}

The meaning of variables is expressed in terms of \emph{valuations}. Element variables are interpreted as model elements, while set variables are interpreted as model subsets.

\begin{definition}[Valuation of variables]\label{def-valuations}
A valuation $\rho$ is a pair of valuations for element and set variables for each $s \in \dict{Sorts}$, that is, $\rho : (\EVar[s] \to M_s) \times (\SVar[s] \to \mathcal{P}(M_s))$. We use $\update{\rho}{x}{m}$ to update valuation $\rho$ by mapping the element variable $\ofsortvar{x}{s}$ to the carrier \emph{element} $m \in M_s$. Respectively, we use $\update{\rho}{X}{A}$ to update the valuation of set variables of $\rho$, by mapping the set variable $\ofsortvar{X}{s}$ to the carrier \emph{subset} $A \subseteq M_s$.
\end{definition}

With these concepts, we recall the semantics of matching logic expressed in terms of set operations.

\begin{definition}[Interpretation of patterns]\label{def:sem} Given a matching logic signature $\Sigma$, model $M$, and a valuation $\rho$, we define pattern interpretation (mapping patterns of sort $s$ to subsets of $M_s$) the following way:

\begin{gather*}
	\bar \rho (\ofsortvar{\fevar x}{s}) \coloneqq \{\rho (x)\} \qquad\qquad 
	\bar \rho (\ofsortvar{\fsvar X}{s}) \coloneqq \rho (X) \qquad\qquad 
	\bar \rho (\ofsortvar{\bevar n}{s}) \coloneqq \emptyset \qquad\qquad 
	\bar \rho (\ofsortvar{\bsvar N}{s}) \coloneqq \emptyset \\
	\bar \rho (\ofsort{\varphi}{s} \land \ofsort{\varphi'}{s}) \coloneqq \bar \rho (\ofsort{\varphi}{s}) \cap \bar \rho (\ofsort{\varphi'}{s}) \qquad\qquad
	\bar \rho (\lnot \ofsort{\varphi}{s}) \coloneqq M_s \setminus \bar \rho (\ofsort{\varphi}{s}) \\
	\bar \rho (\exists_{s'} \ld \ofsort{\varphi}{s}) \coloneqq \bigcup_{m \in M_{s'}} \overline{\update{\rho}{x}{m}} (\subst{\ofsort{\varphi}{s}}{\ofsortvar{\bevar 0}{s'}}{\ofsortvar{x}{s'}}), \text{where } x \notin \FV[s']{\ofsort{\varphi}{s}} \\
  \bar \rho (\mu \ld \ofsort{\varphi}{s}) \coloneqq \lfp \mathcal{F}^\rho_{\ofsort{\varphi}{s}, \ofsortvar{X}{s}}, \text{where } \mathcal{F}^\rho_{\ofsort{\varphi}{s}, \ofsortvar{X}{s}}(A) = \overline{\update{\rho}{X}{A}} (\subst{\ofsort{\varphi}{s}}{\ofsortvar{\bsvar 0}{s}}{\ofsortvar{X}{s}}) \text{ and } X \notin \FV[s]{\ofsort{\varphi}{s}}\\
	\bar \rho (\polyapp {\sigma} {\varphi_{s_1}} {\dots} {\varphi_{s_n}} *) \coloneqq \polyapp {\sigma_M} {\bar\rho(\varphi_{s_1})} {\dots} {\bar\rho(\varphi_{s_n})} *, \text{where } \sigma \in \Sigma_{\sepcomma {s_1} {\dots} {s_n} s *}
\end{gather*}
We use $\lfp \mathcal{F}$ to denote the least fixpoint of a monotone function $\mathcal{F}$, and define it as an intersection of pre-fixpoints as in~\cite{chen2019mu}.\footnote{Note that $\mathcal{F}^\rho_{\ofsort{\varphi}{s}}$ is only monotone when $\ofsort{\varphi}{s}$ is a positive pattern (as defined in~\cite{chen2019mu}), which is crucial when proving the soundness of a logical deduction system. However, in this paper, we do not define syntactic proofs; therefore, positivity of patterns is not relevant to our definitions.} To define a total semantics for the locally nameless representation, we include an empty interpretation for dangling bound variables; however, only closed patterns should be evaluated.
\end{definition}

Finally, we revisit the definition of validity and satisfaction in matching logic.

\begin{definition}[Satisfaction]\label{def:sat}
A pattern $\ofsort{\varphi}{s}$ holds in a model $M$ (denoted as $M \models \ofsort{\varphi}{s}$), if for all valuations $\rho$, $\bar \rho(\ofsort{\varphi}{s}) = M_s$. A model $M$ satisfies (or validates) a theory $\Gamma$ ($M \models \Gamma$), if all patterns of $\Gamma$ hold in $M$.
\end{definition}

\paragraph{A dependently typed implementation of matching logic}\label{sec:dependent}

The matching logic variant discussed here has also been implemented in the \Coq{} proof assistant~\cite{kurucz2025on,kore-ml}. Next, we review the necessary abstractions from this Rocq encoding to support later sections, and for more details, we refer to~\cite{kurucz2025on}. We do not recall well-understood inductive types namely, \dict{List}, \texttt{Option}, union (\texttt{Sum}), and dependent pairs (\texttt{SigT}) but refer to \Cref{sec:depintro} for their formal definition. We use the following notations for these types:
\begin{itemize}
    \item For \dict{List}, $[]$ denotes the empty list, $x \conscc \textit{xs}$ prepends $x$ to the list $\textit{xs}$, and $\map{f}{\textit{xs}}$ applies the function $f$ to all elements of $\textit{xs}$.
    \item For \texttt{Option}, \dict{None} denotes the empty, and \dict{Some} denotes the non-empty constructor.
    \item For \texttt{Sum}, \dict{inl} and \dict{inr} are the constructors that inject a value from the left or right type into the union.
    \item Dependent pairs (\texttt{SigT}) are denoted by $\DPair{A}{x}{P}$ (where $P$ can depend on $x$), while $\fst{p}$ and $\snd{p}$ is used to obtain the first and second  element of a (dependent) pair $p$.
\end{itemize}
First, we present dependently typed matching logic signatures, and two auxiliary inductive types (heterogeneous lists and list membership) used in the encoding of patterns.

\medskip
\noindent\hfill\begin{minipage}{0.35\textwidth}
\typedef[record]{\texttt{Signature}}[\dict{Type}]{
    {\dict{Sorts}}{\dict{Type}},
    {\dict{EV}}{\dict{Sort} \to \dict{Type}},
    {\dict{SV}}{\dict{Sort} \to \dict{Type}},
    {\dict{symbols}}{\dict{Type}},
    {\dict{params}}{\dict{symbols} \to \listof{\dict{Sorts}}},
    {\dict{return}}{\dict{symbols} \to \dict{Sorts}}
}
\end{minipage}\hfill\begin{minipage}{0.60\textwidth}
\typedef[inductive]{\app{\texttt{HList}}{A, (F : A \to \dict{Type})}}[\listof{A} \to \dict{Type}]{
    {\textit{hnil}}{\app {\texttt{HList}}{A, F, []}},
    {\textit{hcons}}{\fa {x, \textit{xs}} \app{F}{x} \to \app {\texttt{HList}}{A, F, \textit{xs}} \to \app {\texttt{HList}}{A, F, (x \conscc \textit{xs})}}
}

\typedef{\app{\dict{In}}{A}}[A \to \listof{A} \to \dict{Type}]{
    {\dict{ineq}}{\fa {x, xs} \app{\dict{In}}{A, x, (x \conscc xs)}},
    {\dict{incons}}{\fa {x, y, xs} \app{\dict{In}}{A, y, xs} \to \app{\dict{In}}{A, y, (x \conscc xs)}}
}
\end{minipage}\hfill
\medskip

A signature (\Cref{def:sig}) is expressed as a record (i.e. a single constructor inductive type with named constructor parameters) containing the variables and symbols. We use two functions \dict{params} and \dict{return} to encode the arity of the symbols. Moreover, in the implementation~\cite{kore-ml}, two further fields ensure the infiniteness of variable names, but we omit these here for simplicity.

Heterogeneous lists are indexed by a normal list and a function. The type of the $i$th element in the heterogeneous list is determined by applying the indexing function to the $i$th element of the index list. For example, $\app{\textit{hcons}}{100,(\app{\textit{hcons}}{\seplist123*,\textit{hnil}})}$ is of type $\app{\texttt{HList}}{\mathbb{B},F,\seplist{\textit{true}}{\textit{false}}*}$, where the index function $F$ maps \textit{true} to $\mathbb{N}$ and \textit{false} to $\listof{\mathbb{N}}$.

The two constructors of \texttt{In} establish membership within a non-empty list by asserting that an element is either the head or resides within the tail. We also present two lemmas relating list membership and mapping (the second one only holds for injective functions), which is leveraged in the following sections.

\begin{lemma}
\label{lem:inmap}
    $\dict{Inmap} : \fa {A, B, (f : A \to B), (x : A), (\textit{xs} : {\normalfont\texttt{List}}\ A)} \app{\normalfont\dict{In}}{A, x, xs} \to \app{\normalfont\dict{In}}{B, (\app{f}{x}), (\map{f}{xs})}$
\end{lemma}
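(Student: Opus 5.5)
We construct $\dict{Inmap}$ by structural recursion (equivalently, induction) on the membership proof $p : \app{\dict{In}}{A, x, xs}$. Since $\dict{In}$ is a type family indexed by both the element and the list, we perform dependent pattern matching on $p$. This refines $xs$ (and, in the head case, also $x$) so that the goal type $\app{\dict{In}}{B, (\app{f}{x}), (\map{f}{xs})}$ specialises accordingly. Crucially, the construction is a transformation of proofs rather than a mere existence statement. The recursion therefore preserves the structure of $p$: the number of $\dict{incons}$ constructors in the output equals that in the input. This is exactly the de Bruijn-index reading of membership proofs, and it keeps the mapped index pointing to the same position.

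\emph{Case $\dict{ineq}$.} Here $p = \app{\dict{ineq}}{x, ys}$ with $xs = x \conscc ys$. The goal reduces to $\app{\dict{In}}{B, (\app{f}{x}), (\map{f}{(x \conscc ys)})}$. By the defining equation of $\dict{map}$, $\map{f}{(x \conscc ys)}$ computes definitionally to $\app{f}{x} \conscc \map{f}{ys}$. We therefore return $\app{\dict{ineq}}{(\app{f}{x}), (\map{f}{ys})}$.

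\emph{Case $\dict{incons}$.} Here $p = \app{\dict{incons}}{y, x, ys, p'}$ with $p' : \app{\dict{In}}{A, x, ys}$ and $xs = y \conscc ys$. The goal again reduces definitionally, now to $\app{\dict{In}}{B, (\app{f}{x}), (\app{f}{y} \conscc \map{f}{ys})}$. The recursive call on the structurally smaller $p'$ yields $\app{\dict{In}}{B, (\app{f}{x}), (\map{f}{ys})}$. Wrapping this in $\dict{incons}$ with head $\app{f}{y}$ closes the case.

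The only potential obstacle is the dependent elimination itself. In the head case, the index $x$ is unified with the head of the list, so the match must be written with an appropriate return clause (a \texttt{match \ldots\ in \ldots\ return} annotation in \Coq{}). Alternatively, one can use the generated induction principle for $\dict{In}$, with motive $\la{x, xs, \_} \app{\dict{In}}{B, (\app{f}{x}), (\map{f}{xs})}$. Both indices are variables in the motive, so no equality reasoning or axioms (such as UIP) are needed. Together with the definitional unfolding of $\dict{map}$ on $\conscc$, this makes both cases go through by direct construction.
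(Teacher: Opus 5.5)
Your construction is correct. It is the standard structural recursion on the membership proof, with $\map{f}{(y \conscc ys)}$ reducing definitionally to $\app{f}{y} \conscc \map{f}{ys}$, which is the routine argument behind this lemma (the paper states it without proof). Your observation that the number of $\dict{incons}$ constructors is preserved is the property that matters in $\dict{liftone}$ (\Cref{def:lift}), where $\app{\dict{Inmap}}{n}$ must keep each bound variable pointing at its corresponding binder.
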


\begin{lemma}
\label{lem:inmapr}
    $\dict{Inmapr} : \fa {A, B, (f : A \to B), (x : A), (\textit{xs} : {\normalfont\texttt{List}}\ A)} (\fa {a, b} \app fa = \app fb \to a = b) \to \\\app{\normalfont\dict{In}}{B, (\app{f}{x}), (\map{f}{xs})} \to \app{\normalfont\dict{In}}{A, x, xs} $
\end{lemma}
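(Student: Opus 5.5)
We prove \Cref{lem:inmapr} by structural recursion on the membership proof $H : \app{\dict{In}}{B, (\app{f}{x}), (\map{f}{xs})}$, generalising over $xs$, and then doing case analysis on $xs$. Because the index $\map{f}{xs}$ is not a variable but a compound term, a direct \texttt{match} on $H$ loses information. We therefore proceed in one of two equivalent ways: either we first destruct $xs$ and then invert $H$ against the now-concrete head form, or we generalise the statement to $\fa{ys} \app{\dict{In}}{B, y, ys} \to ys = \map{f}{xs} \to y = \app{f}{x} \to \app{\dict{In}}{A, x, xs}$ and recurse on the membership proof with these equations as extra hypotheses.

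The case analysis on $xs$ runs as follows.
\begin{itemize}
\item If $xs = []$, then $\map{f}{xs} = []$. No constructor of \dict{In} produces an empty index list, so $H$ is absurd and the case is discharged by inversion.
\item If $xs = a \conscc xs'$, then $\map{f}{xs} = \app{f}{a} \conscc \map{f}{xs'}$, and we invert $H$.
  \begin{itemize}
  \item If $H$ is $\dict{ineq}$, inversion yields $\app{f}{x} = \app{f}{a}$. The injectivity hypothesis turns this into $x = a$. Rewriting with this equation, we return $\dict{ineq}$ at type $\app{\dict{In}}{A, x, (x \conscc xs')}$.
  \item If $H$ is $\app{\dict{incons}}{H'}$ with $H' : \app{\dict{In}}{B, (\app{f}{x}), (\map{f}{xs'})}$, the recursive call on the structurally smaller $H'$ gives $\app{\dict{In}}{A, x, xs'}$. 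We wrap this result in $\dict{incons}$.
  \end{itemize}
\end{itemize}

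The one delicate step is the inversion in the cons case. Since \dict{In} lives in \dict{Type} and is used as a de Bruijn index, the proof must be a genuine computable function rather than an opaque propositional argument. The inversion equations, such as $\app{f}{x} = \app{f}{a}$ and the identification of the tail with $\map{f}{xs'}$, must be obtained by dependent pattern matching, for instance through the generalised-equation formulation above. This avoids needing \texttt{JMeq} or axiom K.

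The injectivity hypothesis is used only in the $\dict{ineq}$ branch. Because equality lives in \dict{Prop}, we can still rewrite with $x = a$ to produce the \dict{Type}-valued result, since eliminating equality into \dict{Type} is allowed. A side benefit of this construction is that the resulting index has the same number of $\dict{incons}$ constructors as $H$, so it is the inverse of \dict{Inmap} on indices. That fact may be useful later but is not required here.
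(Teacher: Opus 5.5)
Your proof is correct. The paper gives no proof of \dict{Inmapr} at all: it is only stated, as an auxiliary lemma relied on from the Rocq development, so there is no alternative route to compare against. Your argument is the standard one and is complete. You proceed by induction on $xs$, or by recursion on the membership proof with the index equations generalised. The empty case is absurd by inversion. In the \dict{ineq} branch, injectivity turns $\app{f}{x} = \app{f}{a}$ into $x = a$. In the other branch, you wrap the recursive result in \dict{incons}. Your remarks about eliminating the \dict{Prop}-valued equalities into \dict{Type} without axiom K are also accurate.
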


Elements of type \texttt{In} represent more than structural proofs of membership; they additionally serve as indices: the number of nested \dict{incons} constructors corresponds directly to the included element's precise index. For this reason, we utilize them as de Bruijn indices in the following encoding of matching logic patterns. We use $\placeholder$ to denote the arguments listed in the constructors (rather than using just constructor names), except \textit{ex}, \textit{mu} and \textit{s}, which are inferrable from the context. For example, in the case of the existential quantifier, $s'$ and the unnamed pattern argument must be placed in the $\placeholder$s. Calling the latter $\varphi$, we obtain $\exists_{s'} \ld \varphi$, matching the syntax presented in \Cref{def:pattern}.

\medskip
\noindent
\typedef[inductive]{\app{\texttt{Pattern}}{(\Sigma : \texttt{Signature})}}[\listof{\dict{Sorts}} \to \listof{\dict{Sorts}} \to \dict{Sorts} \to \dict{Type}]{
		{\fevar \placeholder}{\fa {s, ex, mu} (\app {\EVar}{s}) \to \sortedclosedpattern{s}{ex}{mu}},
		{\fsvar \placeholder}{\fa {s, ex, mu} (\app {\SVar}{s}) \to \sortedclosedpattern{s}{ex}{mu}},
		{\bevar \placeholder}{\fa {s, ex, mu} \app {\texttt{In}}{s, ex}\to \sortedclosedpattern{s}{ex}{mu}},
		{\bsvar \placeholder}{\fa {s, ex, mu} \app {\texttt{In}}{s, mu} \to \sortedclosedpattern{s}{ex}{mu}},
		{\syapp \placeholder \placeholder}{\fa {ex, mu, (\sigma : \Sigma)} \app {\texttt{HList}}{\dict{Sorts}, (\closedpattern{ex}{mu}), (\app {\argsorts} {\sigma})} \to \sortedclosedpattern{(\app {\retsorts} {\sigma})}{ex}{mu}},
		{\lnot \placeholder}{\fa {s, ex, mu} \sortedclosedpattern{s}{ex}{mu} \to \sortedclosedpattern{s}{ex}{mu}},
		{\placeholder \land \placeholder}{\fa {s, ex, mu} \sortedclosedpattern{s}{ex}{mu} \to \sortedclosedpattern{s}{ex}{mu} \to \sortedclosedpattern{s}{ex}{mu}},
		{\exists_{\placeholder} \ld \placeholder}{\fa {s, (s' : \dict{Sorts}), ex, mu} \sortedclosedpattern{s}{(s' \conscc ex)}{mu} \to \sortedclosedpattern{s}{ex}{mu}},
		{\mu \ld \placeholder}{\fa {s, ex, mu} \sortedclosedpattern{s}{ex}{(s \conscc mu)} \to \sortedclosedpattern{s}{ex}{mu}}
}
\medskip

The specialty of this encoding is that scoping and sorting is built into the type indices; therefore, ill-sorted or ill-scoped patterns cannot be constructed.\footnote{We note that the syntactic condition for positivity in case of $\mu$ binders is not encoded into the syntax currently.} The implicit arguments $\textit{ex}$ and $ \textit{mu}$ express two lists (environments) containing the sorts of the available de Bruijn indices (bound variables). Quantifiers extend the environments for their bodies, while de Bruijn indices are represented as values of \texttt{In}, as discussed above.  For readability, we keep using natural numbers to represent de Bruijn indices, but note that these numbers directly correspond to the number of \dict{incons} constructors in a value of type \texttt{In}.

We also draw attention to the syntax of application patterns ($\syapp \placeholder \placeholder$). Since its arguments can have different sorts, we utilized heterogeneous lists to express these argument patterns, specified for a symbol $\sigma$. Revisiting the example in \Cref{ex:lnpat}, the following subpattern has the following type.\footnote{We use angled parentheses for the arguments instead of round ones as in Def.~\ref{def:pattern} to indicate the use of heterogeneous lists.}

\begin{equation*}
    \syapp {\textit{isZero}} {\sephetlist{\exists_{\textit{nat}} \ld \syapp{\textit{plus}} {\sephetlist {\bevar 0, \bevar 1}}}} : \sortedclosedpattern{\textit{bool}}{(\textit{nat} \conscc \textit{ex})}{\textit{mu}}
\end{equation*}

In this example, \textit{ex} and \textit{mu} can be any sorting environments (potentially empty), but it is important that the dangling index $\bevar 1$ (which would be $\bevar 0$ outside of the binder) needs to be of sort \textit{nat}; therefore, \textit{nat} is the first element of the environment for element variables. Next, we proceed with presenting the definition of dependently typed models and variable valuations.

\medskip\hfill
\begin{minipage}{0.52\textwidth}
\noindent
\typedef[record]{\app{\texttt{Model}}{(\Sigma : \texttt{Signature})}}[\dict{Type}]{
    {\textit{carrier}}{\dict{Sorts} \to \dict{Type}},
    {\textit{inh}}{\fa {s} \app{\textit{carrier}}{s}},
    {\textit{app}}{\fa {\sigma} \app {\texttt{HList}} {\dict{Sorts}, \textit{carrier}, (\app {\argsorts}{\sigma})} \to\\& \mathcal P(\app {\textit{carrier}}{(\app {\retsorts} {\sigma})})}
}
\end{minipage}\hfill
\begin{minipage}{0.44\textwidth}
\noindent
\typedef[record]{\app{\texttt{Valuation}}{\Sigma, (M : \app{\texttt{Model}}{\Sigma})}}[Type]{
    {\textit{evar\_val}}{\fa{s} \app{\EVar}{\Sigma, s} \to \app {\textit{carrier}}{M, s}},
    {\textit{svar\_val}}{\fa{s} \app{\SVar}{\Sigma, s} \to \mathcal P(\app {\textit{carrier}}{M, s})}
}
\end{minipage}\hfill
\medskip

In the definition of models, we use the same idea as before: heterogeneous lists are used to represent carrier elements for the arguments of polyadic, many-sorted symbol interpretation. The second field (\textit{inh}) expresses that all the carriers for the sorts should have at least one element.

Variable valuations are defined as expected: using a pair of functions that, for each sort, assign an element to element variables, and a subset of the carrier to set variables respectively.

Finally, we highlight some branches of the machine-checked semantics of matching logic: the semantics of element variables, conjunction, application, and existential quantification. We do not recall the machine-checked definition of extended symbol application (denoted by \textit{app\_ext}, defined in \Cref{eq-appext}), heterogeneous list mapping (denoted by $\hlistmap{}{}$), valuation update (\Cref{def-valuations}), bound variable substitution (denoted by \textit{bevar\_subst}, introduced in \Cref{def:sub}), and how fresh variables are produced via $\textit{fresh}$, rather refer to~\cite{kurucz2025on,kore-ml}. We also omit the \texttt{Signature} and \texttt{Model} parameters from the \texttt{Model}, \texttt{Valuation}, and \texttt{Pattern} types.
\fundef{\dict{eval}}{\fa {\Sigma, ex, mu, s, M} \texttt{Valuation} \to \sortedclosedpattern{s}{ex}{mu} \to \mathcal P(\app {\textit{carrier}}{M, s})}{
 	{\rho\ (\ofsortvar{\bevar x}{s})}{\emptyset},
    {\rho\ (\ofsortvar{\fevar x}{s})}{\{\app{\textit{evar\_val}}{\rho, x}\}},
    {\rho\ (\varphi_1 \land \varphi_2)}{\app {\textit{eval}} {\rho, \varphi_1} \cap \app {\textit{eval}} {\rho, \varphi_2}},
    {\rho\ (\syapp{\sigma}{xs})}{\app {\textit{app\_ext}} {\sigma, (\hlistmap{\app {\textit{eval}} {\rho}}{xs})}},
    {\rho\ (\exists_{s'} \ld \varphi)}{\bigcup_{\mathclap{c : \app {\textit{carrier}}{M, s'}}} \app {\textit{eval}}{(\update{\rho}{x}{c}),(\app {\textit{bevar\_subst}} {(ex \coloneqq []), \fevar x, \varphi})} \qquad\scalebox{.85}{where $x = \app {\textit{fresh}_{s'}} {(\FV[s']{\varphi})}$}}
}

In the machine-checked definition, we rely on a variant of set theory implemented in the \Coq{} library called \emph{stdpp}, which provides the necessary set operations. The idea is to represent a set containing $A$-typed elements based on a judgment $A \to \dict{Prop}$ (which is essentially a subtype of \dict{Type} in \Coq{}). The semantics should not be used to evaluate ill-formed patterns (which contain dangling variables), but to construct a total definition, we assign the empty set to dangling variables. In case of binders, correctly bound variables are replaced by free variables (by \textit{bevar\_subst}, where $\textit{ex} := []$ specifies that the outermost de Bruijn index needs to be replaced). Free variables are evaluated based on the valuation functions (which is updated for the fresh variables introduced during the evaluation of binders). Conjunction is mapped to set intersection, application utilizes heterogeneous list mapping (formally defined in~\cite{kore-ml,kurucz2025on}), and the extended interpretation for symbol application.

Finally, we encode dependently typed theories as $\mathcal P(\DPair{\dict{Sorts}}{s}{\sortedclosedpattern{s}{[]}{[]}})$. It is impossible to represent theories as a simple set of patterns because patterns are sorted; thus, their type is indexed by their sorts; therefore, patterns of distinct sorts have different types. Instead, we use dependent pairs within the theory which include a sort and a closed pattern (i.e. its environments are empty) of that sort.

\section{Combining models}\label{operators}

There are two main operators for reusing existing models: gluing and extension. Here, we explore possible definitions and properties of the former, and briefly discuss how to adapt it to encode the latter.

It is reasonable to expect that each model is defined over its own signature containing different sorts and symbols. Therefore, we begin by defining these operations over signatures. Gluing is a balanced operation, combining two structures by merging all definitions they provide. In contrast, extension is an unbalanced version that extends a structure only with certain elements (e.g., signatures with symbols or models and theories with axioms). We expect that, given two models with different signatures and a theory for the models, each validated by its respective model, it is automatically derivable that the composite model resulting from either operation validates the concatenation of the two theories.

\paragraph{Equivalences.}

We define isomorphisms to identify types that cannot be proven identical, but contain the same information. This is expressed using two functions that convert between the values of each type, and two proofs that these functions are inverses of each other in both directions.

A similar relation is that of partial isomorphisms, which holds when one type contains potentially less information than the other. We express this by altering one of the functions to have an optional result, and adjust the inverse restrictions to accommodate this: in one direction, a value must always be returned, while the other only needs to hold if one is returned. We express these concepts in type theory as records:

\medskip\hfill%
\begin{minipage}{.48\linewidth}
    \noindent
    \typedef[record]{\app{\dict{iso}}{(A, B : \dict{Type})}}[\dict{Type}]{
    	{\dict{partialf}'}{A \to B},
    	{\dict{partialg}'}{B \to A},
    	{\dict{partialgf}'}{\fa x \app{\dict{partialg}'}{(\app{\dict{partialf}'}{x})} = x},
    	{\dict{partialfg}'}{\fa x \app{\dict{partialf}'}{(\app{\dict{partialg}'}{x})} = x}
    }
\end{minipage}%
\begin{minipage}{.48\linewidth}
    \noindent
    \typedef[record]{\app{\dict{partial}}{(A, B : \dict{Type})}}[\dict{Type}]{
    	{\dict{partialf}}{A \to B},
    	{\dict{partialg}}{B \to \app{\dict{option}}{A}},
    	{\dict{partialgf}}{\fa x \app{\dict{partialg}}{(\app{\dict{partialf}}{x})} = \app{\dict{Some}}{x}},
    	{\dict{partialfg}}{\fa {x, y} \app{\dict{partialg}}{x} = \app{\dict{Some}}{y} \to \app{\dict{partialf}}{y} = x}
    }
\end{minipage}%
\hfill\medskip

\dict{iso} is an equivalence relation, while \dict{partial} is a partial order up to isomorphism. A well-known mathematical example of isomorphic types are natural numbers and integers. Partial isomorhism holds between any finite type and natural numbers, for example.

We can also define the equivalence of \textbf{record} types using the pairwise equivalence of relevant fields. For example, we may consider two signatures (shown in \Cref{sec:dependent}) equivalent, if their sort and symbol types are isomorphic, both variables return isomorphic types for appropriately converted (via $\dict{partialf}'$ and $\dict{partialg}'$) sorts, while likewise-convertible symbols have convertible parameter and return sorts. In contrast, if the exact value of a field is irrelevant, it only needs to be definable, it can be omitted from the comparison. An example of this is the \dict{inh} field of models. We denote these equivalence relations, including $\dict{iso}$, with $\cong$, while partial isomorphisms are always denoted explicitly by \dict{partial}.

\subsection{Combination of signatures}

The simplest approach to combining signatures is taking the disjoint union of the types defined in the signature, sorts and symbols. This ensures that these same types in the combined signature not only contain every sort and symbol from the inputs, but also that they contain no new, ``junk'' values that cannot be handled. This way, the union-based combination provides a simple way to identify which signature each object originates from via pattern matching, and thereby to delegate the other operations over them to the same method of the appropriate signature.

This approach has very few restrictions and is useful when the models describe entirely distinct concepts, such as those serving as a full specification of a commonly used type. However, in more complex models that may already contain several sorts that would need to be identified in their union, the sum-based approach becomes problematic. It is possible to assign the same carrier set to them; however, complete equality cannot be stated. If the logic is extended with a subsorting relation (as in~\cite{kurucz2025on}), its antisymmetry property can be exploited for this purpose. By allowing the user to mark identical sorts as subsorts of one another, they will also become equal by antisymmetry.

This sum also highlights certain properties of this combination that a generalization must maintain. We know that disjoint unions are commutative and associative under isomorphism: $A \uplus B \cong B \uplus A$ and $(A \uplus B) \uplus C \cong A \uplus (B \uplus C)$. Any form of signature combination must also fulfill these properties, as the order in which they are merged should not matter.

We use this idea to serve as a blueprint for defining this combination in a generalized way. New types representing sorts and symbols must be defined in a way that they incorporate these types from the constituent signatures, which can be ensured using partial isomorphisms. Furthermore, these types shall not contain additional values, which is enforceable via another restriction over both isomorphisms, stating that at least one of them must be able to convert back. However, it is not strictly necessary to ensure that only one does so, removing the disjoint nature of the union. This allows identifying sorts at a deeper level, without the need to rely on subsorting. The functions of the signatures are required to agree on their result, if both can handle the given input.

\begin{definition}[Conditions for signature combining]\label{def:sigcombinecond} We define combinable signatures as having two types capable of representing the strict union of their sorts and symbols, and agreeing on the result of their functions when applied to shared sorts or symbols.

\medskip\noindent
\typedef[class]{\app{\dict{gluable}}{(\Sigma_1, \Sigma_2 : \texttt{Signature})}}{
	{\dict{SharedSort}}{\dict{Type}},
	{\dict{subone}}{\app{\dict{partial}}{(\app{\dict{Sorts}}{\Sigma_1}), \dict{SharedSort}}},
	{\dict{subtwo}}{\app{\dict{partial}}{(\app{\dict{Sorts}}{\Sigma_2}), \dict{SharedSort}}},
	{\dict{subtotalone}}{\fa s (\te {s'} \app{\dict{partialg}}{\dict{subone}, s} = \app{\dict{Some}}{s'}) \lor \te {s'} \app{\dict{partialg}}{\dict{subtwo}, s} = \app{\dict{Some}}{s'}},
    {\dict{agreeone}}{\fa {s, s_1, s_2} \app{\dict{partialg}}{\dict{subone}, s} = \app{\dict{Some}}{s_1} \to \app{\dict{partialg}}{\dict{subtwo}, s} = \app{\dict{Some}}{s_2} \to \app{\dict{EV}}{\Sigma_1, s_1} \cong \app{\dict{EV}}{\Sigma_2, s_2}},
    {\dict{agreetwo}}{\fa {s, s_1, s_2} \app{\dict{partialg}}{\dict{subone}, s} = \app{\dict{Some}}{s_1} \to \app{\dict{partialg}}{\dict{subtwo}, s} = \app{\dict{Some}}{s_2} \to \app{\dict{SV}}{\Sigma_1, s_1} \cong \app{\dict{SV}}{\Sigma_2, s_2}},
	{\dict{SharedSymbols}}{\dict{Type}},
	{\dict{subfive}}{\app{\dict{partial}}{(\app{\dict{symbols}}{\Sigma_1}), \dict{SharedSymbols}}},
	{\dict{subsix}}{\app{\dict{partial}}{(\app{\dict{symbols}}{\Sigma_2}), \dict{SharedSymbols}}},
	{\dict{subtotalthree}}{\fa \sigma (\te {\sigma'} \app{\dict{partialg}}{\dict{subfive}, \sigma} = \app{\dict{Some}}{\sigma'}) \lor \te {\sigma'} \app{\dict{partialg}}{\dict{subsix}, \sigma} = \app{\dict{Some}}{\sigma'}},
    {\dict{agreethree}}{\fa {\sigma, \sigma_1, \sigma_2} \app{\dict{partialg}}{\dict{subfive}, \sigma} = \app{\dict{Some}}{\sigma_1} \to \app{\dict{partialg}}{\dict{subsix}, \sigma} = \app{\dict{Some}}{\sigma_2} \to \breakrow\map{(\app{\dict{partialf}}{\dict{subone}})}{(\app{\dict{params}}{\Sigma_1, \sigma_1})} = \map{(\app{\dict{partialf}}{\dict{subtwo}})}{(\app{\dict{params}}{\Sigma_2, \sigma_2})}},
    {\dict{agreefour}}{\fa {\sigma, \sigma_1, \sigma_2} \app{\dict{partialg}}{\dict{subfive}, \sigma} = \app{\dict{Some}}{\sigma_1} \to \app{\dict{partialg}}{\dict{subsix}, \sigma} = \app{\dict{Some}}{\sigma_2} \to \breakrow\app{\dict{partialf}}{\dict{subone}, (\app{\dict{return}}{\Sigma_1, \sigma_1})} = \app{\dict{partialf}}{\dict{subtwo}, (\app{\dict{return}}{\Sigma_2, \sigma_2})}}
}
\end{definition}

Next, we provide an instance of \dict{partial} for both sides of the union of arbitrary types $A$ and $B$, followed by an instance of \dict{gluable} based on the union type and these two partial instances.

\hfill%
\begin{minipage}{.48\linewidth}
    \noindent\fundef*{\dict{inlpartial}}{\fa {A, B} \app{\dict{partial}}{A, (A \uplus B)}}{
    	{\dict{partialf}}{\unskip}{\unskip}{\dict{inl}},
    	{\dict{partialg}}{\unskip}{x}{\othercase{\app{\dict{Some}}{x'}}{x = \app{\dict{inl}}{x'}}{\dict{None}}},
    	{\ldots}{\unskip}{\unskip}{\ldots}
    }
\end{minipage}%
\begin{minipage}{.48\linewidth}
    \noindent\fundef*{\dict{inrpartial}}{\fa {A, B} \app{\dict{partial}}{B, (A \uplus B)}}{
    	{\dict{partialf}}{\unskip}{\unskip}{\dict{inr}},
    	{\dict{partialg}}{\unskip}{x}{\othercase{\app{\dict{Some}}{x'}}{x = \app{\dict{inr}}{x'}}{\dict{None}}},
    	{\ldots}{\unskip}{\unskip}{\ldots}
    }
\end{minipage}%
\hfill

The remaining conditions are proven using case separation. The \dict{gluable} instance is as follows.
\fundef*!{\dict{defaultsum}}{\fa {\Sigma_1, \Sigma_2} \app{\dict{gluable}}{\Sigma_1, \Sigma_2}}{
	{\dict{SharedSort}}{\unskip}{\unskip}{\app{\dict{Sorts}}{\Sigma_1} \uplus \;\app{\dict{Sorts}}{\Sigma_2}\setnobreak},
	{\dict{SharedSymbols}}{\unskip}{\unskip}{\app{\dict{symbols}}{\Sigma_1} \uplus \;\app{\dict{symbols}}{\Sigma_2}},
	{\dict{subone}}{\unskip}{\unskip}{\app{\dict{inlpartial}}{(\app{\dict{Sorts}}{\Sigma_1}), (\app{\dict{Sorts}}{\Sigma_2})}\setnobreak},
	{\dict{subtwo}}{\unskip}{\unskip}{\app{\dict{inrpartial}}{(\app{\dict{Sorts}}{\Sigma_1}), (\app{\dict{Sorts}}{\Sigma_2})}},
	{\dict{subfive}}{\unskip}{\unskip}{\app{\dict{inlpartial}}{(\app{\dict{symbols}}{\Sigma_1}), (\app{\dict{symbols}}{\Sigma_2})}\setnobreak},
	{\dict{subsix}}{\unskip}{\unskip}{\app{\dict{inrpartial}}{(\app{\dict{symbols}}{\Sigma_1}), (\app{\dict{symbols}}{\Sigma_2})}},
	{\ldots\textit{_no_junk}}{\unskip}{\unskip}{\ldots\setnobreak},
    {\ldots\textit{_agree}}{\unskip}{\unskip}{\ldots}
}

The no junk conditions hold by the no junk property of the sum type, while the agreement conditions are provable vacuously, since no sorts or symbols appear in both constituent domains. We provide a further example instance in \Cref{sec:example}.

If two signatures fulfill the \dict{gluable} property, it is possible to combine them automatically.

\begin{definition}[Signature combining]\label{def:signaturecombining} We construct a composite signature: its sorts and symbols are based on the given instance, and the functions are deferred to the parameter signatures' functions.
\fundef*{\dict{glue}}{\fa {\Sigma_1, \Sigma_2} \app{\dict{gluable}}{\Sigma_1, \Sigma_2} \to \texttt{Signature}}{
    {\dict{Sorts}}{G}{\unskip}{\app{\dict{SharedSort}}{G}},
    {\dict{EV}}{G}{s}{\othercase
        {\app{\dict{EV}}{\Sigma_1, s'}}
        {\te {s'} \app{\dict{partialg}}{(\app{\dict{subone}}{G}), s} = \app{\dict{Some}}{s'}}
        {\app{\dict{EV}}{\Sigma_2, s'}}
        [\te {s'} \app{\dict{partialg}}{(\app{\dict{subtwo}}{G}), s} = \app{\dict{Some}}{s'}]
    },
    {\dict{SV}}{G}{s}{\othercase
        {\app{\dict{SV}}{\Sigma_1, s'}}
        {\te {s'} \app{\dict{partialg}}{(\app{\dict{subone}}{G}), s} = \app{\dict{Some}}{s'}}
        {\app{\dict{SV}}{\Sigma_2, s'}}
        [\te {s'} \app{\dict{partialg}}{(\app{\dict{subtwo}}{G}), s} = \app{\dict{Some}}{s'}]
    },
    {\dict{symbols}}{G}{\unskip}{\app{\dict{SharedSymbols}}{G}},
    {\dict{params}}{G}{\sigma}{\othercase
        {\map{(\app{\dict{partialf}}{(\app{\dict{subone}}{G})})}{(\app{\dict{params}}{\Sigma_1, \sigma'})}}
        {\begin{aligned} &\te {\sigma'} \app{\dict{partialg}}{(\app{\dict{subfive}}{G}), \sigma} = \breakrow\app{\dict{Some}}{\sigma'}\end{aligned}}
        {\map{(\app{\dict{partialf}}{(\app{\dict{subtwo}}{G})})}{(\app{\dict{params}}{\Sigma_2, \sigma'})}}
        [\begin{aligned} &\te {\sigma'} \app{\dict{partialg}}{(\app{\dict{subsix}}{G}), \sigma} = \breakrow\app{\dict{Some}}{\sigma'}\end{aligned}]
    },
    {\dict{return}}{G}{\sigma}{\othercase
        {\app{\dict{partialf}}{(\app{\dict{subone}}{G}), (\app{\dict{return}}{\Sigma_1, \sigma'})}}
        {\te {\sigma'} \app{\dict{partialg}}{(\app{\dict{subfive}}{G}), \sigma} = \app{\dict{Some}}{\sigma'}}
        {\app{\dict{partialf}}{(\app{\dict{subtwo}}{G}), (\app{\dict{return}}{\Sigma_2, \sigma'})}}
        [\te {\sigma'} \app{\dict{partialg}}{(\app{\dict{subsix}}{G}), \sigma} = \app{\dict{Some}}{\sigma'}]
    }
}

Note that due to the no junk and agreement axioms, the functions above are total and deterministic (up to isomorphism).
\end{definition}

This binary operator, defined over signatures, fulfills the aforementioned properties of the sum type. Commutativity holds due to the symmetric nature of the partial isomorphisms allowing them to be interchanged, the commutativity of logical disjunction, and the agreement conditions ensuring that regardless of which implementation the combining function chooses, they are equal.

\begin{\unprovedlemma}[Combining is commutative] Given two combinable signatures, they are combinable the other way around as well, and combining will produce equivalent signatures.
\begin{align*}
    \fa {\Sigma_1, \Sigma_2} &(H_1 : \app{\dict{gluable}}{\Sigma_1, \Sigma_2}) \to\breakrow \DPair{(\app{\dict{gluable}}{\Sigma_2, \Sigma_1})}{H_2}{\app{\dict{glue}}{\Sigma_1, \Sigma_2, H_1} \cong \app{\dict{glue}}{\Sigma_2, \Sigma_1, H_2}}
\end{align*}
\end{\unprovedlemma}

Associativity holds for a similar reason to commutativity; however, to state it, the combined signature must be further combinable.

\begin{\unprovedlemma}[Combining is associative] If two signatures are combinable, and their combination is combinable with another, the order of the operations can be reversed.
\begin{align*}
    \fa {\Sigma_1, \Sigma_2, \Sigma_3} &(H_1 : \app{\dict{gluable}}{\Sigma_1, \Sigma_2}) \to (H_2 : \app{\dict{gluable}}{(\app{\dict{glue}}{\Sigma_1, \Sigma_2, H_1}), \Sigma_3}) \to \breakrow \DPair{(\app{\dict{gluable}}{\Sigma_2, \Sigma_3})}{H_3}{\DPair{(\app{\dict{gluable}}{\Sigma_1, (\app{\dict{glue}}{\Sigma_2, \Sigma_3, H_3})}) \breakrow}{H_4}{\app{\dict{glue}}{(\app{\dict{glue}}{\Sigma_1, \Sigma_2, H_1}), \Sigma_3, H_2} \cong \app{\dict{glue}}{\Sigma_1, (\app{\dict{glue}}{\Sigma_2, \Sigma_3, H_3}), H_4}}}
\end{align*}
\end{\unprovedlemma}

Furthermore, due to the no junk properties, combining a signature with itself produces an equivalent signature. This property is called idempotence.

\begin{\unprovedlemma}[Combining is idempotent] A signature is combinable with itself and the result is equivalent to the input.
\begin{equation*}
    \fa \Sigma \DPair{(\app{\dict{gluable}}{\Sigma, \Sigma})}{H}{\app{\dict{glue}}{\Sigma, \Sigma, H} \cong \Sigma}
\end{equation*}
\end{\unprovedlemma}

The set of signatures, the binary operation over them, and the three properties form a semilattice, where the gluing function is essentially the join of signatures. Such a semilattice induces a partial order~\cite{latticebook}. Intuitively, a signature that is greater than another can be used to create every pattern of every sort that the lesser signature supports, potentially more.

\begin{definition}[Partial order over signatures]\label{def:sigorder} A signature is considered greater than another if they are combinable, and combining them produces a signature equivalent to it.
\begin{equation*}
    \Sigma_1 \preceq \Sigma_2 \coloneqq \DPair{(\app{\dict{gluable}}{\Sigma_1, \Sigma_2})}{H}{\app{\dict{glue}}{\Sigma_1, \Sigma_2, H} \cong \Sigma_2}
\end{equation*}
\end{definition}

The following statement holds about this relation.

\begin{proposition}[The combination is greater than its constituents]\label{prop:combgreater} $\fa {\Sigma_1, \Sigma_2} (H : \app{\dict{gluable}}{\Sigma_1, \Sigma_2}) \to \Sigma_1 \preceq \app{\dict{glue}}{\Sigma_1, \Sigma_2, H} \land \Sigma_2 \preceq \app{\dict{glue}}{\Sigma_1, \Sigma_2, H}$.
\end{proposition}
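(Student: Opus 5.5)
To show $\Sigma_1 \preceq \app{\dict{glue}}{\Sigma_1, \Sigma_2, H}$, write $G$ for $\app{\dict{glue}}{\Sigma_1, \Sigma_2, H}$. Following \Cref{def:sigorder}, I need two things: an explicit instance $H' : \app{\dict{gluable}}{\Sigma_1, G}$, and a proof that $\app{\dict{glue}}{\Sigma_1, G, H'} \cong G$. The claim for $\Sigma_2$ is proved by the same construction with the roles of the two embeddings swapped, plus one extra appeal to the agreement fields of $H$ (explained below).

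\emph{Constructing $H'$.} Take the united sorts and symbols to be those of $G$, i.e.\ $\app{\dict{SharedSort}}{H}$ and $\app{\dict{SharedSymbols}}{H}$.
\begin{itemize}
\item For $\dict{subone}$ and $\dict{subfive}$, reuse $\app{\dict{subone}}{H}$ and $\app{\dict{subfive}}{H}$.
\item For $\dict{subtwo}$ and $\dict{subsix}$, use the identity partial isomorphism: $\dict{partialf} = \textit{id}$ and $\dict{partialg} = \dict{Some}$. Its two laws hold definitionally.
\item The no junk conditions hold trivially, because the identity side always returns $\dict{Some}$.
\end{itemize}
For the agreement fields, suppose $\app{\dict{partialg}}{(\app{\dict{subone}}{H}), s} = \app{\dict{Some}}{s_1}$. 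Then $\app{\dict{EV}}{G, s}$ falls into its first branch, and since $\dict{partialg}$ is a function, the witness in that branch is exactly $s_1$. Hence $\app{\dict{EV}}{G,s} = \app{\dict{EV}}{\Sigma_1, s_1}$, and the same reasoning gives the $\dict{SV}$ agreement. For symbols, if $\app{\dict{partialg}}{(\app{\dict{subfive}}{H}), \sigma} = \app{\dict{Some}}{\sigma_1}$, then $\app{\dict{params}}{G,\sigma}$ unfolds to $\map{(\app{\dict{partialf}}{(\app{\dict{subone}}{H})})}{(\app{\dict{params}}{\Sigma_1,\sigma_1})}$. Mapping the identity over it changes nothing, which is exactly what $\dict{agreethree}$ requires; $\dict{agreefour}$ follows in the same way.

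\emph{Equivalence with $G$.} The sort and symbol types of $\app{\dict{glue}}{\Sigma_1, G, H'}$ are literally those of $G$, so the identity isomorphism serves. For each field ($\dict{EV}$, $\dict{SV}$, $\dict{params}$, $\dict{return}$), I case split on which branch the glued definition selects. If it selects $\Sigma_1$, the agreement facts just proved identify the result with $G$'s field. Otherwise it selects $G$'s field directly.

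\emph{The $\Sigma_2$ case.} Here I use $\app{\dict{subtwo}}{H}$ and $\app{\dict{subsix}}{H}$ instead. The agreement proofs now need more work: $G$'s case-defined fields prefer the $\Sigma_1$ branch whenever it applies, so for a sort or symbol shared by both signatures I must transport through $\dict{agreeone}$--$\dict{agreefour}$ of $H$ to relate $\Sigma_2$'s data to what $G$ actually returns.

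I expect the main obstacle to be the dependent-type bookkeeping around the case-defined fields of $G$. Since $\app{\dict{EV}}{G,s}$ is defined by an existential case split, reasoning about it requires a decision on $\app{\dict{partialg}}{\ldots, s}$. I would realise that decision by matching on the $\dict{option}$ value, which makes both the branch choice and the uniqueness of the witness computational. Equalities between sorts then have to be transported into $\dict{EV}$ and $\dict{SV}$ types, which is where the isomorphism-valued agreement fields, rather than plain equalities, are needed. I would first factor out a helper lemma stating that each field of $G$ is isomorphic to the corresponding field of $\Sigma_i$ whenever the corresponding $\dict{partialg}$ returns $\dict{Some}$, and then derive both halves of the proposition from it.
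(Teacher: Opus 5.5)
Your proposal is correct. The paper gives no proof of this proposition. It only precedes it with the remark that commutativity, associativity and idempotence make $\textit{combine}$ a semilattice join, which suggests the textbook argument $a \sqcup (a \sqcup b) = (a \sqcup a) \sqcup b = a \sqcup b$. Your direct construction is the more robust route in this dependently typed setting: you take the united sorts and symbols to be those of $G$ itself, reuse $H$'s embeddings on one side and the identity partial isomorphism on the other, and case-split on the relevant $\textit{from}$. The lattice-style derivation would need facts the paper does not provide, namely that $\texttt{Combinable}$ instances and $\textit{combine}$ can be transported along $\cong$. More seriously, the associativity proposition as stated returns an instance $H_3 : \texttt{Combinable}\ \Sigma_1\ \Sigma_2$ of its own choosing. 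Different instances yield non-equivalent combinations (compare the disjoint-sum instance with $C_\Sigma$ in the example), so nothing forces $H_3$ to be your $H$. Your explicit witness also makes the sort and symbol components of the resulting equivalence the identity. That is exactly how the definition of $\textit{lift}$ and its appendix simplification use $H.2$, namely as if it were an equality. Your handling of the $\Sigma_2$ half is right as well. Since $G$'s case-defined fields prefer the $\Sigma_1$ branch, the variable types agree only up to the isomorphisms supplied by $\textit{EV\_agree}$ and $\textit{SV\_agree}$ of $H$. Parameter and return sorts, by contrast, agree exactly via $\textit{args\_agree}$ and $\textit{ret\_agree}$, which is what signature equivalence requires. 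One small correction for the mechanization: only $\textit{fromto}$ of the identity partial isomorphism holds by reflexivity. $\textit{tofrom}$ needs injectivity of $\textit{Some}$, and $\textit{map}\ \textit{id}\ l = l$ is a lemma proved by induction on $l$ rather than a conversion; both are trivial, but neither is definitional.
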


Note that it is possible to define an ``empty signature'' with no sorts, incapable of hosting any pattern. While this is hardly useful in practice, it can serve as the lower bound for the semilattice and its order.

\subsection{Syntax lifting}\label{sec:patterns}

The syntax of the logic undergoes no structural changes in response to signature combination; however, patterns expressed using different signatures are incompatible with one another. This makes it impossible to check if a pattern is in a set of axioms, for example. The relation induced by the combining operation of the signatures can be used to reconstruct patterns in ``lesser'' signatures into those in ``greater'' ones. We define the \emph{lifting} function as follows.

\begin{definition}[Lifting patterns]\label{def:lift} We define lifting using recursive descent.
\fundef!{\dict{liftone}}{\fa {\Sigma_1, \Sigma, ex, mu, s, (H : \Sigma_1 \preceq \Sigma)} \sigsortedclosedpattern{\Sigma_1}{s}{ex}{mu} \to \breakrow\sigsortedclosedpattern{\Sigma}{(\app{\dict{partialf}}{(\app{\dict{subone}}{\fst{H}}), s})}{(\map{(\app{\dict{partialf}}{(\app{\dict{subone}}{\fst{H}})})}{ex})}{(\map{(\app{\dict{partialf}}{(\app{\dict{subone}}{\fst{H}})})}{mu})}}{
	{(\fevar x)}{\fevar x\setnobreak},
	{(\fsvar X)}{\fsvar X\setnobreak},
	{(\bevar n)}{\bevar{\app{\dict{Inmap}}{n}}\setnobreak},
	{(\bsvar N)}{\bsvar{\app{\dict{Inmap}}{N}}},
	{(\syapp \sigma {xs})}{\syapp{\app{\dict{partialf}}{(\app{\dict{subfive}}{\fst{H}}), \sigma}}{(\hlistmap{\dict{liftone}}{xs}})\setnobreak},
	{(\exists_{s'} \ld \varphi)}{\exists_{\app{\dict{partialf}}{(\app{\dict{subone}}{\fst{H}}), s'}} \ld \app{\dict{liftone}}{\varphi}},
	{(\varphi_1 \land \varphi_2)}{\app{\dict{liftone}}{\varphi_1} \land \app{\dict{liftone}}{\varphi_2}\setnobreak},
	{(\lnot \varphi)}{\lnot (\app{\dict{liftone}}{\varphi})\setnobreak},
	{(\mu \ld \varphi)}{\mu \ld \app{\dict{liftone}}{\varphi}}
}
\end{definition}

The definition above appears simple, with the pattern being simply reconstructed recursively, with only the implicit parameters changing to match the modified sorts. However, there are certain implicit simplifications that enable the input values, such as the $x$ element variable, to be reused in the definition. The explanation of this case can be found in \Cref{app:simpl}. Furthermore, we can also define an inverse to lifting (called \dict{unlift}), which we explain in more detail in \Cref{app:unlift}.

\begin{addtoappendix}
\section{Implicit simplification in \texorpdfstring{\dict{liftone}}{lift}}\label{app:simpl}

As mentioned after \Cref{def:lift}, sometimes some simplification is needed to ensure the type correctness of the definition. In this example, in the context of $(H : \Sigma_1 \preceq \Sigma)$, we explain how the input variable $x$ of type $\app{\dict{EV}}{\Sigma, (\app{\dict{partialf}}{(\app{\dict{subone}}{\fst H}), s})}$ can be used where a variable of type $\app{\dict{EV}}{\Sigma_1, s}$ is needed.
\begin{align*}
    \app{\dict{EV}}{\Sigma, (\app{\dict{partialf}}{(\app{\dict{subone}}{\fst H}), s})} &\eqby{\snd{H}} \\
    \app{\dict{EV}}{(\app{\dict{glue}}{\Sigma_1, \Sigma}), (\app{\dict{partialf}}{(\app{\dict{subone}}{\fst H}), s})} &\eqby{\Cref{def:signaturecombining}} \\
    \othercase
        {\app{\dict{EV}}{\Sigma_1, s'}}
        {\te {s'} \app{\dict{partialg}}{(\app{\dict{subone}}{\fst{H}}), (\app{\dict{partialf}}{(\app{\dict{subone}}{\fst H}), s})} = \app{\dict{Some}}{s'}}
        \ldots[\ldots]  &\eqby{\app{\dict{partialgf}}{\fst{H}}\textit{ from Sec. \ref{operators}}} \\
    \othercase
        {\app{\dict{EV}}{\Sigma_1, s'}}
        {\te {s'} \app{\dict{Some}}{s} = \app{\dict{Some}}{s'}}
        \ldots[\ldots]  &= \\
    \app{\dict{EV}}{\Sigma_1, s} &
\end{align*}
\end{addtoappendix}

\begin{addtoappendix}
\section{The \texorpdfstring{\dict{unliftone}}{unlift} function}\label{app:unlift}
Similarly to pattern lifting shown in \Cref{def:lift}, an \emph{unlifting} function may be defined, which is the optional inverse of lifting.

\begin{definition}[Unlifting patterns] We define the function via recursive descent, and case separation.
\fundef{\dict{unliftone}}{\fa {\Sigma_1, \Sigma, ex, mu, s, (H : \Sigma_1 \preceq \Sigma)} \breakrow\sigsortedclosedpattern{\Sigma}{(\app{\dict{partialf}}{(\app{\dict{subone}}{\fst{H}}), s})}{(\map{(\app{\dict{partialf}}{(\app{\dict{subone}}{\fst{H}})})}{ex})}{(\map{(\app{\dict{partialf}}{(\app{\dict{subone}}{\fst{H}})})}{mu})} \to \breakrow\app{\dict{option}}{(\sigsortedclosedpattern{\Sigma_1}{s}{ex}{mu})}}{
	{(\fevar x)}{\app{\dict{Some}}{\fevar x}\qquad\qquad\app{\dict{unliftone}}{(\bevar n)} \coloneqq \app{\dict{Some}}{(\bevar{\app{\dict{Inmapr}}{\dict{toinj}, n}})}},
	{(\fsvar X)}{\app{\dict{Some}}{\fsvar X}\qquad\qquad\app{\dict{unliftone}}{(\bsvar N)} \coloneqq \app{\dict{Some}}{(\bsvar{\app{\dict{Inmapr}}{\dict{toinj}, N}})}},
	{(\syapp \sigma {\sephetlist{\varphi_1, \ldots, \varphi_n}})}{\othercase
        {\app{\dict{Some}}{(\syapp{\sigma}{\sephetlist{\varphi_1', \ldots, \varphi_n'}})}}
        {\begin{aligned} &\te {\varphi_1', \ldots, \varphi_n'}\bigwedge\limits_{\mathclap{i=1 \ldots n}} \app{\dict{unliftone}}{\varphi_i} = \app{\dict{Some}}{\varphi_i'}\end{aligned}}
        {\dict{None}}
    },
	{(\lnot \varphi)}{\othercase
        {\app{\dict{Some}}{\lnot \varphi'}}
        {\te {\varphi'} \app{\dict{unliftone}}{\varphi} = \app{\dict{Some}}{\varphi'}}
        {\dict{None}}
    },
	{(\varphi_1 \land \varphi_2)}{\othercase
        {\app{\dict{Some}}{\varphi_1' \land \varphi_2'}}
        {\begin{aligned} &\te {\varphi_1', \varphi_2'} \app{\dict{unliftone}}{\varphi_1} = \app{\dict{Some}}{\varphi_1} \land \breakrow\app{\dict{unliftone}}{\varphi_2} = \app{\dict{Some}}{\varphi_2}\end{aligned}}
        {\dict{None}}
    },
	{(\exists_{s'} \ld \varphi)}{\othercase
        {\app{Some}{\exists_{s''} \ld \varphi'}}
        {\begin{aligned} &\te {s'', \varphi'} \app{\dict{partialg}}{(\app{\dict{subone}}{\fst{H}}), s'} = \app{\dict{Some}}{s''} \land \breakrow\app{\dict{unliftone}}{\varphi} = \app{\dict{Some}}{\varphi'}\end{aligned}}
        {\dict{None}}
    },
	{(\mu \ld \varphi)}{\othercase
        {\app{Some}{\mu \ld \varphi'}}
        {\te {\varphi'} \app{\dict{unliftone}}{\varphi} = \app{\dict{Some}}{\varphi'}}
        {\dict{None}}
    }
}
The definition depends on the following lemma.

\end{definition}

\begin{lemma}[\dict{toinj}---injectivity of \dict{partialf}] Given a partial isomorphism \textit{sub}, $\fa {x, y} \app{\dict{partialf}}{\textit{sub}, x} = \app{\dict{partialf}}{\textit{sub}, y} \to x = y$.
\end{lemma}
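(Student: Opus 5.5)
The plan is the standard "left inverse implies injective" argument, using only the \dict{partialgf} field of the \dict{partial} record. Let $x, y$ be given together with a hypothesis $e : \app{\dict{partialf}}{\textit{sub}, x} = \app{\dict{partialf}}{\textit{sub}, y}$. First, apply the function $\app{\dict{partialg}}{\textit{sub}}$ to both sides of $e$ (congruence of equality, \texttt{f\_equal} in \Coq{}). This gives
\begin{equation*}
\app{\dict{partialg}}{\textit{sub}, (\app{\dict{partialf}}{\textit{sub}, x})} = \app{\dict{partialg}}{\textit{sub}, (\app{\dict{partialf}}{\textit{sub}, y})}.
\end{equation*}

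Second, rewrite both sides with $\app{\dict{partialgf}}{\textit{sub}}$, instantiated at $x$ on the left and at $y$ on the right. The equation then becomes $\app{\dict{Some}}{x} = \app{\dict{Some}}{y}$. Third, conclude $x = y$ from injectivity of the constructor \dict{Some}. In \Coq{} this is obtained by \texttt{injection} or \texttt{inversion}, or by applying a projection function that returns $x$ on $\app{\dict{Some}}{x}$ and a default value otherwise.

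No step here is genuinely hard; the only thing to watch is universe or transport bookkeeping. Since the statement concerns plain equality on the domain type $A$ of $\textit{sub}$, not any dependent type, no transport is needed. The \dict{partialfg} field is not used at all.

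When the lemma is used in \dict{unliftone}, $\textit{sub}$ is instantiated with $\app{\dict{subone}}{\fst{H}}$. This provides exactly the injectivity hypothesis that \Cref{lem:inmapr} requires.
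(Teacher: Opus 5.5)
Your proposal is correct and matches the paper's proof step for step. Both apply congruence of equality with $\dict{partialg}$ to the hypothesis, rewrite both sides using $\dict{partialgf}$ to obtain $\textit{Some}\ x = \textit{Some}\ y$, and conclude by injectivity of $\textit{Some}$.
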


\begin{proof}
It is proven by applying the congruence of equality with $\app{\dict{partialg}}{sub}$ to the hypothesis, and simplifying both sides using \app{\dict{partialgf}}{sub}, resulting in $\app{\dict{Some}}{x} = \app{\dict{Some}}{y}$. By the injectivity of \dict{Some}, we arrive at the expected conclusion.
\end{proof}

The definition of \dict{unliftone} has three major cases. The first consists of the bases cases, the first four, which always successfully produce a value that is a reconstruction similar to \dict{liftone}. Bound variables---since they are represented by an \dict{In} proof---need to be adjusted using \Cref{lem:inmapr} (\dict{Inmapr}). The second major case encompasses the simple recursive cases (i.e., application, negation, conjunction, and $\mu$ binder), which succeed only if all recursive calls do, and only reconstruct the pattern from the recursive results. Application is of special interest among them, wherein the type of the argument pattern mandates that the return sort of the symbol must necessarily be $(\app{\dict{partialf}}{(\app{\dict{subone}}{\fst{H}}), s})$. As per \Cref{def:signaturecombining}, this is only possible if $\sigma$ is a symbol of $\Sigma_1$, therefore it can be used to reconstruct the pattern in that signature.

The third major case is the existential quantifier. This is the only one that can fail on its own merit, and is the reason why an optional pattern is returned. This also shows that a combined signature can host more patterns than its constituents combined, and explains how: by allowing quantifiers to use bound variables with the greater signature's sort (which is not necessarily included in the smaller signature). 

\begin{proposition}[Partial isomorphism for patterns] Given suitable signatures $\Sigma_1$ and $\Sigma$, the \dict{liftone} and \dict{unliftone} functions define a partial isomorphism over patterns.
\begin{align*}
    &(H : \Sigma_1 \preceq \Sigma) \to \app{\dict{partial}}{(\sigsortedclosedpattern{\Sigma_1}{s}{ex}{mu}), \breakrow(\sigsortedclosedpattern{\Sigma}{(\app{\dict{partialf}}{(\app{\dict{subone}}{\fst{H}}), s})}{(\map{(\app{\dict{partialf}}{(\app{\dict{subone}}{\fst{H}})})}{ex})}{(\map{(\app{\dict{partialf}}{(\app{\dict{subone}}{\fst{H}})})}{mu})})}
\end{align*}
\end{proposition}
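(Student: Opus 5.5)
We build the \dict{partial} record directly: \dict{partialf} is \dict{liftone} (instantiated with $H$), \dict{partialg} is \dict{unliftone}, and the two laws are proven by structural induction, one on patterns of $\Sigma_1$ and one on patterns of $\Sigma$. Because \texttt{Pattern} is nested through \texttt{HList} in the application case, plain structural induction is too weak. We therefore first derive (or reuse from~\cite{kore-ml}) a nested induction principle in which the hypothesis for $\syapp{\sigma}{xs}$ states that the property holds for every element of $xs$. All statements are generalised over arbitrary $ex$, $mu$ and $s$, since the binder cases extend the environments.

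For \dict{partialgf}, i.e.\ $\app{\dict{unliftone}}{(\app{\dict{liftone}}{\varphi})} = \app{\dict{Some}}{\varphi}$, we induct on $\varphi$. The free-variable cases are immediate, modulo the transport equation of \Cref{app:simpl}, which has to be shown to cancel in both directions. For bound variables we need an auxiliary lemma, $\app{\dict{Inmapr}}{\dict{toinj}, (\app{\dict{Inmap}}{n})} = n$, which we prove by induction on the \dict{In} proof $n$. For negation, conjunction, $\mu$ and application, we rewrite with the induction hypotheses (for application, with a pointwise lemma about \texttt{HList} map composition), so that the ``if'' branch of \dict{unliftone} is taken and the result rebuilds $\varphi$. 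For $\exists_{s'}$ we additionally use $\app{\dict{partialgf}}{(\app{\dict{subone}}{\fst H})}$ to discharge $\app{\dict{partialg}}{(\app{\dict{subone}}{\fst H}), (\app{\dict{partialf}}{\ldots, s'})} = \app{\dict{Some}}{s'}$.

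For \dict{partialfg}, i.e.\ $\app{\dict{unliftone}}{\psi} = \app{\dict{Some}}{\varphi} \to \app{\dict{liftone}}{\varphi} = \psi$, we induct on $\psi$, generalising over $\varphi$. In each recursive case we invert the hypothesis: the \dict{None} branches are contradictory, and the \dict{Some} branch gives the sub-results that the induction hypotheses consume. For bound variables we need the converse lemma $\app{\dict{Inmap}}{(\app{\dict{Inmapr}}{\dict{toinj}, n})} = n$, again by induction on $n$. In the $\exists$ case, from $\app{\dict{partialg}}{\ldots, s'} = \app{\dict{Some}}{s''}$ we obtain $\app{\dict{partialf}}{\ldots, s''} = s'$ via \dict{partialfg} of $\app{\dict{subone}}{\fst H}$, and we rewrite the sort index with this equation.

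The main obstacle is dependent typing, not the logic. The index of $\psi$ has the form $\app{\dict{partialf}}{\ldots, s}$ together with mapped environments, so a direct induction on $\psi$ is ill-formed. I would first generalise the index to arbitrary $s^*, ex^*, mu^*$ carrying equations $s^* = \app{\dict{partialf}}{\ldots,s}$ and so on, and then use \dict{toinj} and the injectivity of $\dict{map}$ under an injective function to resolve these equations in the binder and variable cases. This is especially delicate in the application case, where one must show that $\sigma$ comes from $\Sigma_1$ (using \dict{agreefour}, and disjointness of the return sort otherwise) and coerce the \texttt{HList} index $\app{\dict{params}}{\sigma}$ accordingly. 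If the equations become unmanageable, I would use decidable or UIP-style reasoning on sorts, or the \texttt{Equations} package's dependent elimination, to eliminate the transports.
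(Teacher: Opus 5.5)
The paper states this proposition without proof, and most of your plan is sound: the nested induction principle, generalising over $ex$, $mu$, $s$ and the indices, the two round-trip lemmas for $\dict{Inmap}$/$\dict{Inmapr}$, and the use of $\dict{partialgf}$ and $\dict{partialfg}$ of $\dict{subone}$ in the $\exists$ case are what is needed. The step that fails is the application case of $\dict{partialfg}$. There you propose to show that the head symbol $\sigma$ of $\psi = \syapp{\sigma}{xs}$ comes from $\Sigma_1$ ``using $\dict{agreefour}$, and disjointness of the return sort otherwise''.

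No such disjointness is available. \dict{gluable} deliberately lets the images of $\dict{subone}$ and $\dict{subtwo}$ overlap, and $\dict{agreefour}$ only constrains symbols that lie in both signatures. For a counterexample, let $\Sigma_1$ have sort $\textit{nat}$ and symbol $\textit{zero}$, and let $\Sigma$ add $\textit{succ}$ with parameters $[\textit{nat}]$ and return sort $\textit{nat}$. Then $\Sigma_1 \preceq \Sigma$, and $\syapp{\textit{succ}}{\sephetlist{\fevar x}}$ has exactly the indices required of $\psi$: sort $\app{\dict{partialf}}{(\app{\dict{subone}}{\fst{H}}), \textit{nat}}$ and empty environments. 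Yet $\textit{succ}$ has no preimage in $\Sigma_1$. The same happens with $\dict{gthrice}$, of sort $\dict{gnum}$, for $\Sigma_1 \preceq \app{\dict{glue}}{\Sigma_1, \Sigma_2, C_\Sigma}$ in \Cref{sec:example}. You are following the remark after the definition of \dict{unliftone} in \Cref{app:unlift}, but that remark only holds for disjoint instances such as $\dict{defaultsum}$. For such a $\sigma$ the application clause as written is not even well-typed: $\sigma$ is reused as a symbol of $\Sigma_1$, and its argument sorts need not lie in the image of $\app{\dict{partialf}}{(\app{\dict{subone}}{\fst{H}})}$.

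The repair changes the definition, not just the proof. In the application case, \dict{unliftone} must first inspect $\app{\dict{partialg}}{(\app{\dict{subfive}}{\fst{H}}), \sigma}$ and return $\dict{None}$ when it is $\dict{None}$. In the $\app{\dict{Some}}{\sigma_1}$ branch, three facts make the clause well-typed:
\begin{itemize}
\item $\dict{partialfg}$ of $\dict{subfive}$ yields $\sigma = \app{\dict{partialf}}{(\app{\dict{subfive}}{\fst{H}}), \sigma_1}$, so the glued $\dict{params}$ and $\dict{return}$ reduce to their $\Sigma_1$ branches;
\item $\dict{toinj}$ gives $s = \app{\dict{return}}{\Sigma_1, \sigma_1}$;
\item the \texttt{HList} index conversion makes the recursive calls typecheck.
\end{itemize}
With this clause, your $\dict{partialfg}$ proof obtains ``$\sigma$ comes from $\Sigma_1$'' by inverting the hypothesis $\app{\dict{unliftone}}{\psi} = \app{\dict{Some}}{\varphi}$, rather than from the typing of $\psi$. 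Your $\dict{partialgf}$ proof must additionally discharge the new test, which succeeds by $\dict{partialgf}$ of $\dict{subfive}$.
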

\end{addtoappendix}

\subsection{Model gluing}

To combine models, we utilize a similar strategy to \Cref{def:sigcombinecond}. First, to ensure that the gluing can be done in any order, we must ensure that the models agree on the functions given shared inputs. We can define another type class for this purpose.

\begin{definition}[Gluable models] Two models are gluable if their signatures are combinable, and if a sort (symbol resp.) appears in both signatures, their carrier sets (interpretation resp.) are equivalent.

\medskip
\noindent
\typedef[class]{\app{\dict{gluableM}}{\Sigma_1, \Sigma_2, M_1, M_2}}{
    {\dict{combinable}}{\app{\dict{gluable}}{\Sigma_1, \Sigma_2}},
    {\dict{agreefive}}{\fa {s, s_1, s_2} \app{\dict{partialg}}{(\app{\dict{subone}}{\dict{combinable}}), s} = \app{\dict{Some}}{s_1} \to \breakrow\app{\dict{partialg}}{(\app{\dict{subtwo}}{\dict{combinable}}), s} = \app{\dict{Some}}{s_2} \to \breakrow\app{\dict{carrier}}{M_1, s_1} \cong \app{\dict{carrier}}{M_2, s_2}},
    {\dict{agreesix}}{\fa {\sigma, \sigma_1, \sigma_2, xs} \app{\dict{partialg}}{(\app{\dict{subfive}}{\dict{combinable}}), \sigma} = \app{\dict{Some}}{\sigma_1} \to \breakrow\app{\dict{partialg}}{(\app{\dict{subsix}}{\dict{combinable}}), \sigma} = \app{\dict{Some}}{\sigma_2} \to \breakrow\app{\dict{app}}{M_1, \sigma_1, xs} \cong \app{\dict{app}}{M_2, \sigma_2, xs}}
}

We omit agreement on \dict{inh}, since, as mentioned at the start of this section, its specific value is irrelevant and not considered for equivalence.
\end{definition}

Based on this type class, we define a gluing function for models in the following way. For this definition, we can define the following conversion by induction:
\begin{equation}\label{lem:hlistmapmove}\tag{\dict{hlistmapmove}}
    \app{\dict{HList}}{A, F, (\map{f}{l})} \leftrightarrow \app{\dict{HList}}{A, (F \circ f), l}
\end{equation}

\begin{definition}[Model gluing]\label{def:glueM} We define the glued model by dispatching its operations to its constituents.
\fundef*{\dict{glueM}}{\fa {\Sigma_1, \Sigma_2, M_1, M_2, (H : \app{\dict{gluableM}}{\Sigma_1, \Sigma_2, M_1, M_2})} \app{\dict{Model}}{(\app{\dict{glue}}{\Sigma_1, \Sigma_2})}}{
    {\dict{carrier}}{\unskip}{s}{\othercase
        {\app{\dict{carrier}}{M_1, s'}}
        {\te {s'} \app{\dict{partialg}}{(\app{\dict{subone}}{(\app{\dict{combinable}}{H})}), s} = \app{\dict{Some}}{s'}}
        {\app{\dict{carrier}}{M_2, s'}}
        [\te {s'} \app{\dict{partialg}}{(\app{\dict{subtwo}}{(\app{\dict{combinable}}{H})}), s} = \app{\dict{Some}}{s'}]
    },
    {\dict{inh}}{\unskip}{s}{\othercase
        {\app{\dict{inh}}{M_1, s'}}
        {\te {s'} \app{\dict{partialg}}{(\app{\dict{subone}}{(\app{\dict{combinable}}{H})}), s} = \app{\dict{Some}}{s'}}
        {\app{\dict{inh}}{M_2, s'}}
        [\te {s'} \app{\dict{partialg}}{(\app{\dict{subtwo}}{(\app{\dict{combinable}}{H})}), s} = \app{\dict{Some}}{s'}]
    },
    {\dict{app}}{\unskip}{\sigma\ \textit{xs}}{\othercase
        {\app{\dict{app}}{M_1, \sigma', (\app{\ref{lem:hlistmapmove}}{\textit{xs}})}}
        {\begin{aligned} &\te {\sigma'} \app{\dict{partialg}}{(\app{\dict{subfive}}{(\app{\dict{combinable}}{H})}), \sigma} = \breakrow\app{\dict{Some}}{\sigma'}\end{aligned}}
        {\app{\dict{app}}{M_2, \sigma', (\app{\ref{lem:hlistmapmove}}{\textit{xs}})}}
        [\begin{aligned} &\te {\sigma'} \app{\dict{partialg}}{(\app{\dict{subsix}}{(\app{\dict{combinable}}{H})}), \sigma} = \breakrow\app{\dict{Some}}{\sigma'}\end{aligned}]
    }
}
The conditions are again total due to the no junk properties.
\end{definition}

Since the combined signature ensures that each sort can be mapped to at least one of the input signatures' sorts, we can defer to the carriers provided by the input models for these sorts. By the type class, we also require them to be isomorphic, if both models have the given sort.

Interpretations of symbol applications may similarly be traced back to the constituent models' methods. By $\dict{subtotalthree}$, the input symbol must be present in at least one of the signatures. Furthermore, by the definition of the combining function (\Cref{def:signaturecombining}), we know that the parameter and return sorts of this symbol are from the same signature, lifted into the combined one. This enables us to reduce the carrier elements (in the argument list \textit{xs}) to the constituent model's elements. With both the symbol and the inputs reduced, it is possible to use the constituent model's interpretation function to produce a set of values in this smaller model's carrier. Now we may utilize the partial isomorphisms again to raise these values back to the glued model's carrier elements. Due to the definition of \dict{return}, this is exactly the type expected to be the result of the glued model's interpretation function.

The above definition is once again simplified for readability. \Cref{app:simpltwo} shows an example of how the $\textit{xs}$ parameter of \dict{app}, as well as the deferred use of $\app{\dict{app}}{M_1}$ are transformed to be well-typed.

\begin{addtoappendix}
\section{Implicit simplification in \texorpdfstring{\dict{app}}{app}}\label{app:simpltwo}

As mentioned after \Cref{def:glueM}, in order for the definition of \dict{app} to work, for example, considerable simplification happens in the background. In the first case, we have $(H : \app{\dict{gluableM}}{\Sigma_1, \Sigma_2, M_1, M_2})$ and the condition $\te {\sigma'} \app{\dict{partialg}}{(\app{\dict{subfive}}{(\app{\dict{combinable}}{H})}), \sigma} = \dict{Some}\ \sigma'$, we can first transform this using the property (of \dict{partial} defined at the beginning of \Cref{operators}) $\app{\dict{partialfg}}{(\app{\dict{subfive}}{(\app{\dict{combinable}}{H})})}$ into $\te {(\sigma' : \app{\dict{symbols}}{\Sigma_1})} \sigma = \app{\dict{partialf}}{(\app{\dict{subfive}}{(\app{\dict{combinable}}{H})}), \sigma'}$. We refer to the symbol of this existential as $\sigma_1$ and the equality as $H_1$. Beyond these, we make use of the following equalities:

\begin{itemize}
    \item the $\eta$ rule of functions: $\fa {A, B, (f : A \to B)} f = \la x \app fx$, provided $x$ does not occur in $f$, where the introduction of the parameter is referred to as \emph{$\eta$-expansion} and its elimination as \emph{$\eta$-reduction}~\cite{LCbook};
    \item the definition of the combined \dict{params} and \dict{return}, from \Cref{def:signaturecombining};
    \item the definition of the glued \dict{carrier}, from \Cref{def:glueM}
\end{itemize}

With these, the type of $xs$ can undergo the following conversions:
\begin{align*}
	\app{\dict{HList}}{\dict{Sorts}, \dict{carrier}, (\app{\dict{params}}{(\app{\dict{glue}}{\Sigma_1, \Sigma_2}), \sigma})} &\eqby{H_1} \\
	\app{\dict{HList}}{\dict{Sorts}, \dict{carrier}, (\app{\dict{params}}{(\app{\dict{glue}}{\Sigma_1, \Sigma_2}), (\app{\dict{partialf}}{(\app{\dict{subfive}}{(\app{\dict{combinable}}{H})}), \sigma_1})})} &\eqby{\dict{params} \text{ def.}} \\
	\app{\dict{HList}}{\dict{Sorts}, \dict{carrier}, (\map{(\app{\dict{partialf}}{(\app{\dict{subone}}{(\app{\dict{combinable}}{H})})})}{(\app{\dict{params}}{\Sigma_1, \sigma_1})})} &
\end{align*}

Contrary to \Cref{app:simpl}, we simplify definitions with case separation without the intermediary step containing the braces and the irrelevant other case. At this stage, \ref{lem:hlistmapmove} can be applied, resulting in the type $\app{\dict{HList}}{\dict{Sorts}, (\la s \app{\dict{carrier}}{(\app{\dict{partialf}}{(\app{\dict{subone}}{(\app{\dict{combinable}}{H})}), s})}), (\app{\dict{params}}{\Sigma_1, \sigma_1})}$, which can be further simplified:
\begin{align*}
	\app{\dict{HList}}{\dict{Sorts}, (\dict{carrier} \circ (\app{\dict{partialf}}{(\app{\dict{subone}}{(\app{\dict{combinable}}{H})})})), (\app{\dict{params}}{\Sigma_1, \sigma_1})} &\eqby{\eta\text{-expansion}} \\
	\app{\dict{HList}}{\dict{Sorts}, (\la s \app{\dict{carrier}}{(\app{\dict{partialf}}{(\app{\dict{subone}}{(\app{\dict{combinable}}{H})}), s})}), (\app{\dict{params}}{\Sigma_1, \sigma_1})} &\eqby{\dict{carrier}\text{ def.}} \\
	\app{\dict{HList}}{\dict{Sorts}, (\la s \app{\dict{carrier}}{s}), (\app{\dict{params}}{\Sigma_1, \sigma_1})} &\eqby{\eta\text{-reduction}} \\
	\app{\dict{HList}}{\dict{Sorts}, \dict{carrier}, (\app{\dict{params}}{\Sigma_1, \sigma_1})} &
\end{align*}

With $xs$ reduced to this type, we can construct $\app{\dict{app}}{\sigma_1, xs}$. This has type $\mathcal P(\app{\dict{carrier}}{(\app{\dict{return}}{\Sigma_1, \sigma_1})}$. This is likewise transformable to the desired return type of the interpretation function:

\begin{align*}
	\mathcal P(\app{\dict{carrier}}{(\app{\dict{return}}{\Sigma_1, \sigma_1})}) &\eqby{\dict{carrier}\text{ def.}} \\
	\mathcal P(\app{\dict{carrier}}{(\app{\dict{partialf}}{(\app{\dict{subone}}{(\app{\dict{combinable}}{H})}), (\app{\dict{return}}{\Sigma_1, \sigma_1})})}) &\eqby{\dict{return}\text{ def.}} \\
	\mathcal P(\app{\dict{carrier}}{(\app{\dict{return}}{(\app{\dict{glue}}{\Sigma_1, \Sigma_2}), (\app{\dict{partialf}}{(\app{\dict{subone}}{(\app{\dict{combinable}}{H})})\ \sigma_1})})}) &\eqby{H_1} \\
	\mathcal P(\app{\dict{carrier}}{(\app{\dict{return}}{(\app{\dict{glue}}{\Sigma_1, \Sigma_2}), \sigma})})
\end{align*}

We note that many type theory systems only allow simplification with definitions, not other equalities, such as $H_1$. Instead, they rely on a concept called \emph{transport}~\cite{hottbook}. Just as these systems attempt to mask this fact from the user, we omit the definition and usage of this function here. Some examples of its usage in this formalization can be found in~\cite{kurucz2025on}.
\end{addtoappendix}

\subsection{Satisfaction}

The first definition necessary for reasoning about combined satisfaction is the combination of theories, definable via the lifting function of patterns (\Cref{sec:patterns}). Given two theories $\Gamma_1$ and $\Gamma_2$ with axioms stated over signatures $\Sigma_1$ and $\Sigma_2$ respectively, we cannot take their union directly because the types of their patterns are indexed by different signatures, and therefore their type is different. However, combining the signatures results in one greater than both (\Cref{prop:combgreater}), which we can utilize as an intermediary in the result type, and apply either the lifting or unlifting to every pattern of the theories to match the unified signature.

\begin{definition}[Merged theories] The merged theory $\app{\dict{glueT}}{\Gamma_1, \Gamma_2}$ can be defined in the following way, using set notation.
\begin{equation*}
    \app{\dict{glueT}}{\Gamma_1, \Gamma_2} \coloneqq \{\app{\dict{liftone}}{\varphi} \mid \varphi \in \Gamma_1\} \cup \{\app{\dict{lifttwo}}{\varphi} \mid \varphi \in \Gamma_2\} \\
\end{equation*}
\end{definition}

\begin{remark*} The following is also a valid definition, based on \dict{unlift} defined in \Cref{app:unlift}.
\begin{equation*}
    \app{\dict{glueT}}{\Gamma_1, \Gamma_2} \coloneqq \{\varphi \mid (\te {\varphi'} \app{\dict{unliftone}}{\varphi} = \app{\dict{Some}}{\varphi'} \land \varphi' \in \Gamma_1) \lor \te {\varphi'} \app{\dict{unlifttwo}}{\varphi} = \app{\dict{Some}}{\varphi'} \land \varphi' \in \Gamma_2 \}
\end{equation*}
\end{remark*}

Both definitions are equivalent, but the former is simpler to use. Due to the sorted nature of the logic and the conditions on combining, the patterns express the same axioms, so this theory is the proper union of the inputs. However, with the axioms residing in the combined signature, it is now theoretically possible for the composite theory to contain axioms that neither of the constituents do. This is helpful for model extension; however, it must be considered when proving satisfaction (\Cref{prop:satisfaction}), by inspecting the combination, and utilizing the fact that in only uses lifting, and does not add anything else.

Before we can evaluate patterns in the glued models, we must address the combination of variable valuations as well. This type behaves differently compared to the others discussed in this section. Defining an unlifting function over them is possible and results in a total definition.

\begin{definition}[Unlifting valuations] By simply converting the sort, the same simplification shown in \Cref{app:simpl,app:simpltwo} applies, and we can dispatch to the greater valuation.
\fundef*{\dict{deval}}{\fa {\Sigma_1, \Sigma, M_1, M, (H : \Sigma_1 \preceq \Sigma)} \app{\dict{Valuation}}{\Sigma, M} \to \app{\dict{Valuation}}{\Sigma_1, M_1}}{
    {\dict{evar_val}}{\rho}{s}{\app{\dict{evar_val}}{\rho, (\app{\dict{partialf}}{(\app{\dict{subone}}{\fst{H}}), s})}},
    {\dict{svar_val}}{\rho}{s}{\app{\dict{svar_val}}{\rho, (\app{\dict{partialf}}{(\app{\dict{subone}}{\fst{H}}), s})}}
}
\end{definition}

Lifting valuations is impossible. We can define two functions that examine the sort and decide whether they can proceed, with types $\fa s \app{\dict{option}}{(\app{\dict{EV}}{s} \to \app{\dict{carrier}}{M, s})}$ and $\fa s \app{\dict{option}}{(\app{\dict{SV}}{s} \to \mathcal P(\app{\dict{carrier}}{M, s}))}$ respectively. These are defined analogously to previous functions, with case separation on \dict{partialg}. However, the sort parameters cannot be moved inside the \dict{option}s, as that would mean the decision has to be made without examining the sort, in which case the only reasonable definitions would be to always return \dict{None} in both, since the results cannot always be determined. These functions are already rather useless; however, if we did define them, the \dict{option}s could move outside the record, resulting in a function of type $\fa {\Sigma_1, \Sigma, M_1, M, (H : \Sigma_1 \preceq \Sigma)} \app{\dict{Valuation}}{\Sigma_1, M_1} \to \app{\dict{option}}{(\app{\dict{Valuation}}{\Sigma, M})}$ that always returns \dict{None}. This, with the first five parameters provided, is now well-typed to serve as the \dict{partialg} of $\app{\dict{partial}}{(\app{\dict{Valuation}}{\Sigma, M}), (\app{\dict{Valuation}}{\Sigma_1, M_1})}$, where \dict{partialf} is \dict{deval} with the same parameters; however, this isomorphism's \dict{partialgf} is unsatisfiable, and is therefore not actually a partial isomorphism.

Finally, we present the main property of gluing, and provide the proof sketch for it in \Cref{app:satisfactionsketch}.

\begin{proposition}[Gluing preserves satisfaction]\label{prop:satisfaction}
    Given two models $M_1$ and $M_2$, each satisfying a theory $\Gamma_1$ and $\Gamma_2$, if they are gluable, the glued model constructed from these two satisfies the glued theory: $M_1 \models \Gamma_1 \to M_2 \models \Gamma_2 \to \app{\dict{glueM}}{M_1, M_2} \models \app{\dict{glueT}}{\Gamma_1, \Gamma_2}$.
\end{proposition}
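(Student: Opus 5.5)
The approach is to reduce satisfaction in the glued model to satisfaction in the constituents via a single semantic lemma relating evaluation of lifted patterns to evaluation of the original patterns. By the definition of $\dict{glueT}$, every axiom of the glued theory has the form $\app{\dict{liftone}}{\varphi}$ with $\varphi \in \Gamma_1$, or $\app{\dict{lifttwo}}{\varphi}$ with $\varphi \in \Gamma_2$. By symmetry (the two cases differ only in using $\dict{subone},\dict{subfive}$ versus $\dict{subtwo},\dict{subsix}$), it suffices to treat the first. Here we instantiate $H : \Sigma_1 \preceq \app{\dict{glue}}{\Sigma_1,\Sigma_2}$ by \Cref{prop:combgreater}.

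\textbf{Key lemma (evaluation commutes with lifting).} For every pattern $\varphi : \sigsortedclosedpattern{\Sigma_1}{s}{ex}{mu}$ and every valuation $\rho$ of the glued model, we aim to show
\begin{equation*}
  \app{\dict{eval}}{\rho, (\app{\dict{liftone}}{\varphi})} \cong \app{\dict{eval}}{(\app{\dict{deval}}{\rho}), \varphi},
\end{equation*}
where $\cong$ is transport along the carrier identification $\app{\dict{carrier}}{(\app{\dict{glueM}}{M_1,M_2}), (\app{\dict{partialf}}{\dict{subone}, s})} \cong \app{\dict{carrier}}{M_1, s}$. This identification is obtained as in \Cref{app:simpl}: by $\dict{partialgf}$, the first branch of the carrier definition fires. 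If the second branch also fires, $\dict{agreefive}$ makes the choice irrelevant up to isomorphism.

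We prove the lemma by induction on $\varphi$, generalising over $ex$, $mu$ and $\rho$. The variable cases are immediate from the definition of $\dict{deval}$, and dangling bound variables give $\emptyset$ on both sides. Negation and conjunction follow from the induction hypothesis, because the transport is a bijection and therefore preserves complements and intersections.

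For application we use the $\dict{app}$ branch of \Cref{def:glueM} together with \ref{lem:hlistmapmove}. We also need an auxiliary induction on heterogeneous lists, showing that $\hlistmap{\dict{eval}}{}$ commutes with $\hlistmap{\dict{liftone}}{}$; with this, $\dict{app\_ext}$ on both sides coincides. If the symbol is also in $\Sigma_2$, $\dict{agreesix}$ again makes the choice of branch harmless.

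For $\exists_{s'}$, the carrier of the lifted sort $\app{\dict{partialf}}{\dict{subone}, s'}$ is in bijection with $M_{1,s'}$, so the two unions range over matching index sets. We additionally need two commutation properties:
\begin{itemize}
\item $\dict{bevar\_subst}$ commutes with lifting, which is a routine syntactic induction, since lifting leaves free variables unchanged;
\item $\app{\dict{deval}}{(\update{\rho}{x}{c})} = \update{(\app{\dict{deval}}{\rho})}{x}{c'}$ for the transported $c'$.
\end{itemize}
The fresh-variable choice must also be handled. Either freshness is preserved because $\FV{\app{\dict{liftone}}{\varphi}}$ corresponds to $\FV{\varphi}$, or, more robustly, we first prove the standard lemma that evaluation does not depend on which fresh name is chosen, and then use that. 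The $\mu$ case is analogous: pre-fixpoints of $\mathcal{F}$ correspond bijectively under the transported powerset map, so the least fixpoints correspond.

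\textbf{Conclusion.} Given any $\rho$, the key lemma yields $\app{\dict{eval}}{\rho, (\app{\dict{liftone}}{\varphi})} \cong \app{\dict{eval}}{(\app{\dict{deval}}{\rho}), \varphi}$. Since $M_1 \models \varphi$, the right-hand side is the full carrier of $M_1$ at sort $s$. The full carrier transports to the full glued carrier, so the lifted axiom holds. Note that this direction needs only $\dict{deval}$, which is total, so the impossibility of lifting valuations is no obstacle.

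\textbf{Main obstacle.} I expect the main difficulty to be dependent-type bookkeeping rather than mathematics. The carrier of the glued model is defined by a case split on $\dict{partialg}$, so every step requires transport lemmas showing that the transports cohere: transport along $\dict{partialgf}$ inside $\dict{liftone}$, inside \ref{lem:hlistmapmove}, and inside $\dict{deval}$ must compose correctly. The $\exists$ case, with freshness and substitution interacting with these transports, is the hardest. There I would first isolate a lemma $\app{\dict{liftone}}{(\app{\dict{bevar\_subst}}{\fevar x, \varphi})} = \app{\dict{bevar\_subst}}{\fevar x, (\app{\dict{liftone}}{\varphi})}$, and prove independence of evaluation from the fresh name.
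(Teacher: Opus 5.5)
Your proposal is correct and follows the same route as the paper's sketch. Both split on membership in $\app{\dict{glueT}}{\Gamma_1, \Gamma_2}$, instantiate the constituents' satisfaction hypotheses at $\app{\dict{deval}}{\rho}$, and finish with an induction over the pattern generalized over $ex$ and $mu$; your commutation lemma between $\dict{eval}$ and $\dict{liftone}$ is exactly the ``generalized version'' the paper leaves implicit. One technical point to make explicit: in the $\exists$ and $\mu$ cases the induction hypothesis is needed for the substituted body (e.g.\ $\app{\textit{bevar\_subst}}{\fevar x, \varphi}$), which is not a structural subterm, so run the induction on pattern size (or follow $\dict{eval}$'s own recursion scheme) rather than purely structurally.
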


\begin{addtoappendix}
\section{Proof sketch of satisfaction preservation}\label{app:satisfactionsketch}

We provide the following proof sketch for \Cref{prop:satisfaction}.

\begin{proof}[Proof sketch]
    The two models, $M_1$ and $M_2$, satisfying their theories, $\Gamma_1$ and $\Gamma_2$ means that (with all parameters of \dict{eval} explicitly provided for clarity),
    \begin{align*}
        \fa {s_1, \varphi_1, \rho_1} \app{\dict{existT}}{s_1, \varphi_1} \in \Gamma_1 \to \app{\dict{eval}}{\Sigma_1, [], [], s_1, M_1, \rho_1, \varphi_1} = \topset{\app{\dict{carrier}}{M_1, s_1}} &\text{ and} \\
        \fa {s_2, \varphi_2, \rho_2} \app{\dict{existT}}{s_2, \varphi_2} \in \Gamma_2 \to \app{\dict{eval}}{\Sigma_2, [], [], s_2, M_2, \rho_2, \varphi_2} = \topset{\app{\dict{carrier}}{M_2, s_2}} &\text{.}
    \end{align*}
    
    Similarly, proving that the glued model satisfies the glued theory means:
    \begin{equation*}
        \fa {s, \varphi, \rho} \app{\dict{existT}}{s, \varphi} \in \app{\dict{glueT}}{\Gamma_1, \Gamma_2} \to \app{\dict{eval}}{\Sigma, [], [], s, (\app{\dict{glueM}}{M_1, M_2}), \rho, \varphi} = \topset{\app{\dict{carrier}}{(\app{\dict{glueM}}{M_1, M_2}), s}}
    \end{equation*}

    Here, $\app{\dict{existT}}{s, \varphi} \in \app{\dict{glueT}}{\Gamma_1, \Gamma_2}$ means that either
    \begin{align*}
        \te {s_1, \varphi_1} \app{\dict{existT}}{s_1, \varphi_1} \in \Gamma_1 \land \app{\dict{partialf}}{\dict{subone}, s_1} = s \land \app{\dict{liftone}}{\varphi_1} = \varphi &\text{ or} \\
        \te {s_2, \varphi_2} \app{\dict{existT}}{s_2, \varphi_2} \in \Gamma_2 \land \app{\dict{partialf}}{\dict{subtwo}, s_2} = s \land \app{\dict{lifttwo}}{\varphi_2} = \varphi &\text{.}
    \end{align*}

    Using these and \dict{deval} to instantiate the previous statements we get
    \begin{align*}
        \app{\dict{eval}}{\Sigma_1, [], [], s_1, M_1, (\app{\dict{deval}}{\rho}), \varphi_1} = \topset{\app{\dict{carrier}}{M_1, s_1}} &\text{ or} \\
        \app{\dict{eval}}{\Sigma_2, [], [], s_2, M_2, (\app{\dict{deval}}{\rho}), \varphi_2} = \topset{\app{\dict{carrier}}{M_2, s_2}} &\text{.}
    \end{align*}

    In both cases, the equalities over $\varphi$ can be used to perform further rewriting in the statement to be proved, eliminating some variables. Then the generalized version of this proof, with arbitrary $ex$ and $mu$ environments, can be completed via induction on the remaining pattern, allowing all previous definitions that appear here to be simplified.
\end{proof}
\end{addtoappendix}

\subsection{Example}\label{sec:example}

We present an example usage of these type classes. Let us define two similar signatures $\Sigma_1$ and $\Sigma_2$:

\smallskip\hfill%
\begin{minipage}{.48\linewidth}
\noindent
\fundef*{\Sigma_1}{\texttt{Signature}}{
    {\dict{Sorts}}{\unskip}{\unskip}{\enumcons{\dict{nat}, \dict{bool}}},
    {\dict{EV}}{\unskip}{s}{\dict{string}},
    {\dict{SV}}{\unskip}{s}{\dict{string}},
    {\dict{symbols}}{\unskip}{\unskip}{\enumcons{\dict{isEven}, \dict{twice}, \dict{plus}}},
    {\dict{params}}{\unskip}{\dict{isEven}}{\seplist {\dict{nat}} *},
    {\dict{params}}{\unskip}{\dict{twice}}{\seplist {\dict{nat}} *},
    {\dict{params}}{\unskip}{\dict{plus}}{\seplist {\dict{nat}} {\dict{nat}} *},
    {\dict{return}}{\unskip}{\dict{isEven}}{\dict{bool}},
    {\dict{return}}{\unskip}{\dict{twice}}{\dict{nat}},
    {\dict{return}}{\unskip}{\dict{plus}}{\dict{nat}}
}
\end{minipage}%
\begin{minipage}{.48\linewidth}
\noindent
\fundef*{\Sigma_2}{\texttt{Signature}}{
    {\dict{Sorts}}{\unskip}{\unskip}{\enumcons{\dict{pos}, \dict{bool}}},
    {\dict{EV}}{\unskip}{s}{\dict{string}},
    {\dict{SV}}{\unskip}{s}{\dict{string}},
    {\dict{symbols}}{\unskip}{\unskip}{\enumcons{\dict{isEven}, \dict{thrice}, \dict{add}}},
    {\dict{params}}{\unskip}{\dict{isEven}}{\seplist {\dict{pos}} *},
    {\dict{params}}{\unskip}{\dict{thrice}}{\seplist {\dict{pos}} *},
    {\dict{params}}{\unskip}{\dict{add}}{\seplist {\dict{pos}} {\dict{pos}} *},
    {\dict{return}}{\unskip}{\dict{isEven}}{\dict{bool}},
    {\dict{return}}{\unskip}{\dict{thrice}}{\dict{pos}},
    {\dict{return}}{\unskip}{\dict{add}}{\dict{pos}}
}
\end{minipage}%
\hfill\medskip

We can provide models $M_1$ and $M_2$ for them as well:

\smallskip\hfill%
\begin{minipage}{.48\linewidth}
\noindent
\fundef*{M_1}{\app{\texttt{Model}}{\Sigma_1}}{
    {\dict{carrier}}{\unskip}{\dict{nat}}{\dict{metanat}},
    {\dict{carrier}}{\unskip}{\dict{bool}}{\dict{metabool}},
    {\dict{app}}{\unskip}{\dict{isEven}\ \sephetlist{n}}{\othercase{\{\dict{metatrue}\}}{n\text{ is even}}{\{\dict{metafalse}\}}},
    {\dict{app}}{\unskip}{\dict{twice}\ \sephetlist{n}}{\{2 \cdot n\}},
    {\dict{app}}{\unskip}{\dict{plus}\ \sephetlist{n, m}}{\{n + m\}}
}
\end{minipage}%
\begin{minipage}{.48\linewidth}
\noindent
\fundef*{M_2}{\app{\texttt{Model}}{\Sigma_2}}{
    {\dict{carrier}}{\unskip}{\dict{pos}}{\dict{metapos}},
    {\dict{carrier}}{\unskip}{\dict{bool}}{\dict{metanat}},
    {\dict{app}}{\unskip}{\dict{isEven}\ \sephetlist{n}}{\othercase{\{1\}}{n\text{ is even}}{\{0\}}},
    {\dict{app}}{\unskip}{\dict{thrice}\ \sephetlist{n}}{\{3 \cdot n\}},
    {\dict{app}}{\unskip}{\dict{add}\ \sephetlist{n, m}}{\{n + m\}}
}
\end{minipage}%
\hfill\medskip

The first model is a standard interpretation of the signature, while the second uses a C-style interpretation for booleans. This incompatibility cannot be resolved during gluing, but the shared number type and the addition function can be merged, allowing for an instantiation better than the union-based one.

We use the abstract notation $\enumcons{\textit{con}_1, \ldots, \textit{con}_n}$ to denote an enumeration-like type, which can be defined inductively, or simulated with an isomorphic type. Partial isomorphisms are given as a set of mappings $v_i^f \mapsto v_i^t$, where $v_i^f$ ranges over the entire domain of the lesser type. For every input $v_i^f$, \dict{partialf} returns $v_i^t$, while \dict{partialg} does the opposite, except if its input has no corresponding mapping, \dict{None} is returned. The conditions of \dict{partial} (see the start of \Cref{operators}) for such a definition hold trivially.
\fundef*{C_\Sigma}{\app{\dict{gluable}}{\Sigma_1, \Sigma_2}}{
    {\dict{SharedSort}}{\unskip}{\unskip}{\enumcons{\dict{gnum}, \dict{gbool}, \dict{gcbool}}},
    {\dict{subone}}{\unskip}{\unskip}{\mapthese{{\dict{nat}}{\dict{gnum}},{\dict{bool}}{\dict{gbool}}}},
    {\dict{subtwo}}{\unskip}{\unskip}{\mapthese{{\dict{pos}}{\dict{gnum}},{\dict{bool}}{\dict{gcbool}}}},
    {\dict{SharedSymbols}}{\unskip}{\unskip}{\enumcons{\dict{gisEven}, \dict{gcisEven}, \dict{gtwice}, \dict{gthrice}, \dict{gplus}}},
    {\dict{subfive}}{\unskip}{\unskip}{\mapthese{{\dict{isEven}}{\dict{gisEven}}, {\dict{twice}}{\dict{gtwice}}, {\dict{plus}}{\dict{gplus}}}},
    {\dict{subsix}}{\unskip}{\unskip}{\mapthese{{\dict{isEven}}{\dict{gcisEven}}, {\dict{thrice}}{\dict{gthrice}}, {\dict{add}}{\dict{gplus}}}},
    \ldots\unskip\unskip\ldots
}

The remaining conditions are easily proven: every sort and symbol is mapped to, therefore both $\dict{subtotalone}$ and $\dict{subtotalthree}$ hold; every variable type is defined as \dict{string}, so $\dict{agreeone}$ and $\dict{agreetwo}$ also hold, regardless of their parameters; and the only shared symbol, $\dict{gplus}$, has the same parameter and return sorts with respect to the isomorphisms as its ancestors in the constituent signatures, proving $\dict{agreethree}$ and $\dict{agreefour}$.

Note that at this stage, it would have been possible to map the \dict{bool}s to the same sort, and with them, to unite the remaining two symbols as well. The combined signature resulting from such an instance can be used for syntactical purposes, and, given models with appropriate interpretations, even for semantic reasoning. However, the instance defined this way cannot be used to glue the specific models presented here, since the models' functions do not agree: the \dict{bool}s have non-isomorphic carrier sets; with them not being identifiable, the \dict{isEven}s cannot be either (due to differing return sorts); and \dict{twice} and \dict{thrice} have a different interpretation, meaning they also must remain separated.

The remaining unifiable sorts, \dict{nat} and \dict{pos} have isomorphic carrier sorts, as each natural can be uniquely mapped to its successor, and each positive to its predecessor. The interpretations of \dict{plus} and \dict{add} agree, as can be seen on the definition of the models; therefore, despite their differing names, they can be treated as the same symbol. $C_\Sigma$ and the arguments outlined in this paragraph can be combined to define $G_M : \app{\dict{gluableM}}{\Sigma_1, \Sigma_2, M_1, M_2}$.

Now we may define two theories, $\Gamma_1$ and $\Gamma_2$, satisfied by the models $M_1$ and $M_2$ respectively. We use the following standard notations: $\varphi_1 \lor \varphi_2 \coloneqq \lnot (\lnot \varphi_1 \land \lnot \varphi_2)$ and $\varphi_1 \to \varphi_2 \coloneqq \lnot \varphi_1 \lor \varphi_2$.
\begin{align*}
	\Gamma_1 &= \{\exists \ld \mu \ld \bevar0 \lor \syapp {\dict{plus}} {\sephetlist{\bevar0, \bsvar0}},\ \exists \ld \syapp {\dict{isEven}} {\sephetlist{\bevar0}}\} \\
	\Gamma_2 &= \{\exists \ld \mu \ld \bevar0 \lor \syapp {\dict{add}} {\sephetlist{\bevar0, \bsvar0}},\ \syapp {\dict{isEven}} {\sephetlist{\fevar x}} \to \syapp {\dict{isEven}} {\sephetlist{\syapp {\dict{thrice}} {\sephetlist{\fevar x}}}}\}
\end{align*}

These axioms evaluate to the following sets:
\begin{equation*}
    \begin{array}{@{} r @{\mskip 24mu\relax} c @{\mskip 32mu\relax} c @{}}
        \Gamma_1: & \bigcup\limits_{x \in \dict{metanat}} \app{\lfp}{(\la Y \{x\} \cup \{x + y \mid y \in Y\})} & \smash[t]{\bigcup\limits_{x \in \dict{metanat}} \othercase{\{\dict{metatrue}\}}{x\text{ is even}}{\{\dict{metafalse}\}}} \\
        \Gamma_2: & \bigcup\limits_{x \in \dict{metapos}} \app{\lfp}{(\la Y \{x\} \cup \{x + y \mid y \in Y\})} & \left(\tinynegspace\dict{metanat} \setminus \othercase{\{1\}}{n\text{ is even}}{\{0\}}\right) \cup \left(\tinynegspace\othercase{\{1\}}{3 \cdot n\text{ is even}}{\{0\}}\right) \\
        & & \hphantom{pushrightalittle}\text{where } \{n\} = \app{\dict{evar_val}}{\rho, \dict{bool}, x}
    \end{array}
\end{equation*}

The first axiom has sort \dict{nat} in the first theory and \dict{pos} in the second, as dictated by the two signatures, but both evaluate to a similar set. The fixpoint unfolds to $\{x,\ x + x,\ x + x + x,\ \ldots\}$, also known as the positive multiples of $x$. Taking the union of every positive integer's positive multiples is equal to the set of all positive integers trivially, since the positive multiples of 1 already form this set; therefore, $M_2$ satisfies this axiom of $\Gamma_2$. Doing the same over natural numbers also holds, with an additional reasoning step: the positive multiples of 0 is the set $\{0\}$, which can be added to the set of positive integers to form the set of natural numbers. As a result $M_1$ also satisfies this axiom in $\Gamma_1$.

The second axiom of $\Gamma_1$ has sort \dict{bool} and its corresponding carrier elements cover the full set of booleans after consdiering just the first two values. $M_1$ satisfies both this axiom, and all of $\Gamma_1$. Similarly, the second axiom of $\Gamma_2$ is of sort \dict{bool}; however, additional care must be taken, since $M_2$ assigns natural numbers as its carrier. As a result, the previous axiom would not be satisfiable here, as it would evaluate only to $\{0, 1\}$. The axiom we posed instead results in a set, which can be simplified to $\dict{metanat}$---since $n$ is even if and only if $3 \cdot n$ is even (regardless of the valuation $\rho$ used in the definition of $n$)---making both this axiom, and all of $\Gamma_2$ satisfied by $M_2$.

With the two theories established, we can combine them:
\begin{align*}
	\app{\dict{glueT}}{\Gamma_1, \Gamma_2} &= \{\exists \ld \mu \ld \bevar0 \lor \syapp {\dict{gplus}} {\sephetlist{\bevar0, \bsvar0}},\ \exists \ld \syapp {\dict{gisEven}} {\sephetlist{\bevar0}},\ \syapp {\dict{gcisEven}} {\sephetlist{\fevar x}} \to \syapp {\dict{gcisEven}} {\sephetlist{\syapp {\dict{gthrice}} {\sephetlist{\fevar x}}}}\}
\end{align*}

We make several observations:

\begin{itemize}
	\item This theory looks similar to the union of its constituents; however, any renaming that was done during the combining of the signatures applies (e.g., \dict{isEven} from $\Gamma_2$ is now \dict{gcisEven}).
	\item After simplification, lifted theories may contain the same pattern, allowing us eliminate duplications, as is the case with the first axiom of each constituent.
	\item The theory does not contain any new patterns, even though the new type would enable us to write, for example, $\exists \ld \syapp{\dict{gisEven}}{\sephetlist{\syapp{\dict{gtwice}}{\bevar0}}} \land \syapp{\dict{gisEven}}{\sephetlist{\syapp{\dict{gthrice}}{\bevar0}}}$. However, the satisfiability of such axioms may still be examined separately.
\end{itemize}

The satisfiability of the combined theory in the glued model is proven, without the need for additional reasoning, by applying \Cref{prop:satisfaction}. Below is a diagram demonstrating the relationship of the discussed concepts of glued satisfaction on these axioms. On the horizontal axis is the relationship of typical matching logic constructs, while the vertical one shows how these are combined with the functions discussed in this paper for each step.

\begingroup
\centering
\begin{tikzpicture}
    \matrix[matrix of math nodes, nodes in empty cells] (m) {
        \text{Signature} &[3mm] \text{Theory} &[3mm] \mathclap{\shortstack{Semantic\\evaluation}} & \text{Carrier elements} &[1mm] &[1mm] \\[3mm]
        \Sigma_1 & \Gamma_1 & |[below=-1.5mm]| \mathclap{\text{via } M_1} & \mathcal P(\app{\dict{carrier}}{M_1}) & \node {=}; \node[below] (m-2-5) {\mathclap{\text{Def.~\ref{def:sat}}}}; & \app{\dict{carrier}}{M_1} \\[5mm]
        \app{\dict{glue}}{\Sigma_1, \Sigma_2} & \app{\dict{glueT}}{\Gamma_1, \Gamma_2} & |[above=2mm]| \mathclap{\text{via } \app{\dict{glueM}}{M_1, M_2}} & \mathcal P(\app{\dict{carrier}}{(\app{\dict{glueM}}{M_1, M_2})}) & \node{=}; \node[above=2mm] (m-3-5) {\mathclap{\text{Prop.~\ref{prop:satisfaction}}}}; & \app{\dict{carrier}}{(\app{\dict{glueM}}{M_1, M_2})} \\[2mm]
        \Sigma_2 & \Gamma_2 & |[below=-1.5mm]| \mathclap{\text{via } M_2} & \mathcal P(\app{\dict{carrier}}{M_2}) & \node {=}; \node[above=2mm] (m-4-5) {\mathclap{\text{Def.~\ref{def:sat}}}}; & \app{\dict{carrier}}{M_2} \\
    };

    \node[anchor=base] at ($(m-1-5)!.5!(m-1-6)$) {\shortstack{Satisfaction\\checking}};

    \draw[-Stealth] (m-2-2) to[edge label'=Def.~\ref{def:pattern}] (m-2-1);
    \draw[-Stealth] (m-2-2) to[edge label=Def.~\ref{def:sem}] (m-2-4);
    \draw[-Stealth] (m-3-2) to (m-3-1);
    \draw[-Stealth] (m-3-2) to (m-3-4);
    \draw[-Stealth] (m-4-2) to[edge label'=Def.~\ref{def:pattern}] (m-4-1);
    \draw[-Stealth] (m-4-2) to[edge node={node[above] (asd) {Def.~\ref{def:sem}}}] (m-4-4);

    \draw[-Stealth] (m-2-1) to (m-3-1);
    \draw[-Stealth] (m-4-1) to (m-3-1);
    \draw[-Stealth] (m-2-2) to (m-3-2);
    \draw[-Stealth] (m-4-2) to (m-3-2);
    \draw[-Stealth] (m-2-3) to (m-3-3);
    \draw[-Stealth] (m-4-3 |- asd.north) to ([yshift=-1mm] m-3-3 |- m-3-2);
    \draw[-Stealth] (m-2-4) to (m-3-4);
    \draw[-Stealth] (m-4-4) to (m-3-4);
    \draw[-Stealth] (m-2-5) to (m-3-5);
    \draw[-Stealth] (m-4-5) to ([yshift=-1mm] m-3-5 |- m-3-4);
\end{tikzpicture}\par
\endgroup

A larger version of this diagram, showing the values for this example can be found in \Cref{app:bigboy}.

\begin{addtoappendix}
\section{Detailed example diagram}\label{app:bigboy}

The following diagram is a more detailed illustration of the relationship of matching logic's concepts and the functions defined over them in this paper. The structures used here match the ones defined in \Cref{sec:example}.

The horizontal arrows again show the relationship of matching logic's concepts, as outlined in \Cref{sec:background}, highlighting how applications and the sorts used in satisfaction checking are governed by the signature. The application interpretations used for semantic evaluation no longer fit on the horizontal arrows, so they are connected via a branch from the relevant arrows. Vertical arrows show the functions outlined in this paper. Since the concrete values are displayed, the function names were moved to the arrows.

\begingroup
\scriptsize
\centering
\begin{tikzpicture}[
    remember picture,
    circled/.style={draw, ellipse},
    every node/.append style={inner sep=.5pt},
    bullet/.style={fill, circle, inner sep=1pt},
    maybebox/.style={draw},
    modelmatrix/.style={matrix of math nodes, below=12mm, nodes={anchor=base}, maybebox},
    every matrix/.append style={ampersand replacement=\&}
]
    \matrix[matrix of nodes, nodes in empty cells] (m) {
        \&[1.5mm] \&[3.5mm] Theory \&[1mm] \node[anchor=base] {\clap{\shortstack{Semantic\\evaluation}}}; \&[2mm] Carrier elements \&[-1mm] \node[anchor=base] {{\shortstack{Satisfaction\\checking}}}; \\[10mm]
        \dict{nat} \& |[above] (sym1)| \dict{plus} \& $\exists \ld \mu \ld \bevar0 \lor \syapp {\tikzmarknode[circled]{circ1}{\dict{plus}}} {\sephetlist{\bevar0, \bsvar0}}$ \& |[bullet]| {} \& $\bigcup\limits_{x \in \dict{metanat}} \app{\lfp}{(\la {Y} \{x\} \cup \{x + y \mid y \in Y\})}$ \& $\stackrel?= \dict{metanat}$ \\
        \dict{bool} \& |[below] (sym3)| \dict{isEven} \& $\exists \ld \syapp {\tikzmarknode[circled]{circ2}{\dict{isEven}}} {\sephetlist{\bevar0}}$ \& |[bullet]| {} \& $\bigcup\limits_{x \in \dict{metanat}} \smallcase{\{\dict{metatrue}\}}{x\text{ is even}}{\{\dict{metafalse}\}}$ \& $\stackrel?= \dict{metabool}$ \\[38mm]
        \dict{gnum} \& |[above] (sym4)| \dict{gplus} \& $\exists \ld \mu \ld \bevar0 \lor \syapp {\tikzmarknode[circled]{circ3}{\dict{gplus}}} {\sephetlist{\bevar0, \bsvar0}}$ \& |[bullet]| {} \& $\bigcup\limits_{x \in \dict{metanat}} \app{\lfp}{(\la {Y} \{x\} \cup \{x + y \mid y \in Y\})}$ \& $\stackrel?= \dict{metanat}$ \\
        \dict{gbool} \& |(sym6)| \dict{gisEven} \& $\exists \ld \syapp {\tikzmarknode[circled]{circ4}{\dict{gisEven}}} {\sephetlist{\bevar0}}$ \& |[bullet]| {} \& $\bigcup\limits_{x \in \dict{metanat}} \smallcase{\{\dict{metatrue}\}}{x\text{ is even}}{\{\dict{metafalse}\}}$ \& $\stackrel?= \dict{metabool}$ \\
        \dict{gcbool} \& |[below] (sym8)| \dict{gcisEven} \& $\syapp {\tikzmarknode[circled]{circ5}{\dict{gcisEven}}} {\sephetlist{\fevar x}} \to \syapp {\tikzmarknode[circled]{circ6}{\dict{gcisEven}}} {\sephetlist{\syapp {\tikzmarknode[circled]{circ7}{\dict{gthrice}}} {\sephetlist{\fevar x}}}}$ \& |[bullet]| {} \& $\left(\tinynegspace\dict{metanat} \setminus \smallcase{\{1\}}{n\text{ is even}}{\{0\}}\right) \cup \left(\tinynegspace\smallcase{\{1\}}{3 \cdot n\text{ is even}}{\{0\}}\right)$ \& $\stackrel?= \dict{metanat}$ \\[45mm]
        pos \& |[above] (sym9)| \dict{add} \& $\exists \ld \mu \ld \bevar0 \lor \syapp {\tikzmarknode[circled]{circ8}{\dict{add}}} {\sephetlist{\bevar0, \bsvar0}}$ \& |[bullet]| {} \& $\bigcup\limits_{x \in \dict{metapos}} \app{\lfp}{(\la {Y} \{x\} \cup \{x + y \mid y \in Y\})}$ \& $\stackrel?= \dict{metapos}$ \\
        bool \& |[below] (sym11)| \dict{isEven} \& $\syapp {\tikzmarknode[circled]{circ9}{\dict{isEven}}} {\sephetlist{\fevar x}} \to \syapp {\tikzmarknode[circled]{circ10}{\dict{isEven}}} {\sephetlist{\syapp {\tikzmarknode[circled]{circ11}{\dict{thrice}}} {\sephetlist{\fevar x}}}}$ \& |[bullet]| {} \& $\left(\tinynegspace\dict{metanat} \setminus \smallcase{\{1\}}{n\text{ is even}}{\{0\}}\right) \cup \left(\tinynegspace\smallcase{\{1\}}{3 \cdot n\text{ is even}}{\{0\}}\right)$ \& $\stackrel?= \dict{metanat}$ \\
    };

    \node at ($(m-1-1)!.5!(m-1-2)$) {Signature};
    \node (sym2) at ($(sym1)!.5!(sym3)$) {\dict{twice}};
    \node (sym5) at ($(sym4)!.5!(sym6)$) {\dict{gtwice}};
    \node (sym7) at ($(sym6)!.5!(sym8)$) {\dict{gthrice}};
    \node (sym10) at ($(sym9)!.5!(sym11)$) {\dict{thrice}};
    \node[above=7mm] at (m-1-3) {};

    \matrix[modelmatrix] at (m-3-4) (model1) {
        \syapp{\dict{plus}}{\sephetlist{n, m}} \mapsto \{n + m\} \\
        \syapp{\dict{twice}}{\sephetlist{n}} \mapsto \{2 \cdot n\} \\
        \syapp{\dict{isEven}}{\sephetlist{n}} \mapsto \smallcase{\{\dict{metatrue}\}}{n\text{ is even}}{\{\dict{metafalse}\}} \\
    };
    \matrix[modelmatrix] at (m-6-4) (model2) {
        \syapp{\dict{gplus}}{\sephetlist{n, m}} \mapsto \{n + m\} \\
        \syapp{\dict{gtwice}}{\sephetlist{n}} \mapsto \{2 \cdot n\} \\
        \syapp{\dict{gisEven}}{\sephetlist{n}} \mapsto \smallcase{\{\dict{metatrue}\}}{n\text{ is even}}{\{\dict{metafalse}\}} \\
        \syapp{\dict{gthrice}}{\sephetlist{n}} \mapsto \{3 \cdot n\} \\
        \syapp{\dict{gcisEven}}{\sephetlist{n}} \mapsto \smallcase{\{1\}}{n\text{ is even}}{\{0\}} \\
    };
    \matrix[modelmatrix] at (m-8-4) (model3) {
        \syapp{\dict{add}}{\sephetlist{n, m}} \mapsto \{n + m\} \\
        \syapp{\dict{thrice}}{\sephetlist{n}} \mapsto \{3 \cdot n\} \\
        \syapp{\dict{isEven}}{\sephetlist{n}} \mapsto \smallcase{\{1\}}{n\text{ is even}}{\{0\}} \\
    };
        
    \draw[-Stealth] (circ1.north) to[in=0, out=90, looseness=.25] (sym1.east);
    \draw[-Stealth] (circ2.south) to[in=0, out=270, looseness=.25] (sym3.east);
    \draw[-Stealth] (circ3.north) to[in=0, out=90, looseness=.25] (sym4.east);
    \draw[-Stealth] (circ4.north) to[in=25, out=90, looseness=.25] (sym6.north east);
    \draw[-Stealth] (circ5.south) to[in=0, out=270, looseness=.25] (sym8.east);
    \draw[-Stealth] (circ6.south) to[in=0, out=270, looseness=.25] (sym8.east);
    \draw[-Stealth] (circ7.north) to[in=0, out=90, looseness=.25] (sym7.east);
    \draw[-Stealth] (circ8.north) to[in=0, out=90, looseness=.25] (sym9.east);
    \draw[-Stealth] (circ9.south) to[in=0, out=270, looseness=.25] (sym11.east);
    \draw[-Stealth] (circ10.south) to[in=0, out=270, looseness=.25] (sym11.east);
    \draw[-Stealth] (circ11.north) to[in=0, out=90, looseness=.25] (sym10.east);

    \draw[-Stealth] (m-2-6.north) to[in=45, out=135, looseness=.15, edge label'=$\dict{nat} \mapsto \dict{metanat}$] (m-2-1.north east);
    \draw[-Stealth] (m-3-6.south) to[in=315, out=225, looseness=.15, edge node={node[above, xshift=-2mm] {$\dict{bool} \mapsto \dict{metabool}$}}] (m-3-1.south east);
    \draw[-Stealth] (m-4-6.north) to[in=45, out=135, looseness=.25, edge label'=$\dict{gnum} \mapsto \dict{metanat}$] (m-4-1.north east);
    \draw[-Stealth] (m-5-6.north) to[in=45, out=135, looseness=.35, edge label'=$\dict{gbool} \mapsto \dict{metabool}$] (m-5-1.north east);
    \draw[-Stealth] (m-6-6.south) to[in=315, out=225, looseness=.20, edge node={node[above, xshift=-4mm] {$\dict{gcbool} \mapsto \dict{metanat}$}}] (m-6-1.south east);
    \draw[-Stealth] (m-7-6.north) to[in=45, out=135, looseness=.15, edge label'=$\dict{pos} \mapsto \dict{metapos}$] (m-7-1.north east);
    \draw[-Stealth] (m-8-6.south) to[in=315, out=225, looseness=.20, edge node={node[above, xshift=-2mm] {$\dict{bool} \mapsto \dict{metanat}$}}] (m-8-1.south east);

    \draw[-Stealth] (m-2-3.base east) to (m-2-5.base west);
    \draw[-Stealth] (m-3-3.base east) to (m-3-5.base west);
    \draw[-Stealth] (m-4-3.base east) to (m-4-5.base west);
    \draw[-Stealth] (m-5-3.base east) to (m-5-5.base west);
    \draw[-Stealth] (m-6-3.base east) to (m-6-5.base west);
    \draw[-Stealth] (m-7-3.base east) to (m-7-5.base west);
    \draw[-Stealth] (m-8-3.base east) to (m-8-5.base west);

    \draw[-Stealth] (m-2-4) to (model1);
    \draw[-Stealth] (m-4-4) to (model2);
    \draw[-Stealth] (m-7-4) to (model3);

    \node[maybebox, fit={(m-2-1) (m-3-1) (sym1) (sym2) (sym3)}] (sig1) {};
    \node[maybebox, fit={(m-4-1) (m-5-1) (m-6-1) (sym4) (sym5) (sym6) (sym7) (sym8)}] (sig2) {};
    \node[maybebox, fit={(m-7-1) (m-8-1) (sym9) (sym10) (sym11)}] (sig3) {};
    \node[maybebox, fit={(m-2-3) (m-3-3)}] (axes1) {};
    \node[maybebox, fit={(m-4-3) (m-5-3) (m-6-3)}] (axes2) {};
    \node[maybebox, fit={(m-7-3) (m-8-3)}] (axes3) {};
    \node[maybebox, fit={(m-2-5) (m-3-5)}] (sem1) {};
    \node[maybebox, fit={(m-4-5) (m-5-5) (m-6-5)}] (sem2) {};
    \node[maybebox, fit={(m-7-5) (m-8-5)}] (sem3) {};
    \node[maybebox, fit={(m-2-6) (m-3-6)}] (sat1) {};
    \node[maybebox, fit={(m-4-6) (m-5-6) (m-6-6)}] (sat2) {};
    \node[maybebox, fit={(m-7-6) (m-8-6)}] (sat3) {};

    \draw let \p1 = ([xshift=-2mm] sig2.west) in (sig3.west) to (\p1 |- sig3.west) to[edge node={node[above, sloped] {\dict{glue}}}] (\p1 |- sig1.west) to (sig1.west) (\p1) edge[-Stealth] (sig2.west);
    \draw let \p1 = ([xshift=-2mm] axes2.west) in (axes3.west) to (\p1 |- axes3.west) to[edge node={node[above, sloped, xshift=-15mm] {\dict{glueT}}}] (\p1 |- axes1.west) to (axes1.west) (\p1) edge[-Stealth] (axes2.west);
    \draw let \p1 = ([xshift=1mm] sem2.east) in (model1.east) to (\p1 |- model1.east) to[edge node={node[above, sloped, xshift=-8mm] {\dict{glueM}}}] (\p1 |- model3.east) to (model3.east) (\p1 |- model2.east) edge[-Stealth] (model2.east);
    \draw let \p1 = ([xshift=2mm] sat2.east) in (sat1.east) to (\p1 |- sat1.east) to[edge node={node[above, sloped] {via \Cref*{prop:satisfaction}}}] (\p1 |- sat3.east) to (sat3.east) (\p1) edge[-Stealth] (sat2.east);
\end{tikzpicture}\par
\endgroup
\end{addtoappendix}

\subsection*{Thoughts on model extension}

In this section we discussed the gluing operation in-depth. Its sibling operation, extension, can be defined by altering our approach to gluing, or by simply reducing it to gluing, constructing a signature, a theory, and a model from the necessary elements (e.g., axioms to be added, new and existing sorts and symbols required by them), and utilizing the same methods presented here. We provide a more detailed discussion of these in \Cref{app:modex}.

\begin{addtoappendix}
\section{Thoughts on model extension}\label{app:modex}

While gluing is a balanced operation that combines two structures such that the result contains exactly their union, extension is an unbalanced operation where only a part of such a structure is embedded into another. The addition of new elements in the extension should not invalidate existing satisfaction proofs. During extension, there are three main components that we may wish to extend: sorts, symbols, and theories. All of these present unique challenges that can be resolved in different ways.

\paragraph{Specification-based solution.}
We can implement extension in two ways. In the first, sorts can be extended by simply relaxing the no junk restriction on the partial isomorphisms used for signature combining (\Cref{def:sigcombinecond}), allowing for new sorts to be included. This makes the ``union'' be an arbitrary upper bound of the parameters, rather than the least. However, this breaks all definitions reliant on this axiom, by introducing a third case where a parameter sort is not in the input signature(s). These cases have to be resolved externally, via user-supplied functions determining the variables and the carrier set of the new sorts.

To extend the set of symbols, we must eliminate its no junk condition. As with removing this restriction over sorts, the functions operating or symbols also need to be supplemented externally. The parameter and return sorts, as well as the interpretation of the new symbols must be separately provided by the user. Note that these functions have access to the extended sorts.

With these two, the type class containing the conditions of signature combining (\Cref{def:sigcombinecond}) is revised as follows.

\medskip
\typedef[class]{\app{\dict{extendable}}{(\Sigma_1, \Sigma_2 : \texttt{Signature})}}{
	{\dict{SharedSort}}{\dict{Type}},
	{\dict{subone}}{\app{\dict{partial}}{(\app{\dict{Sorts}}{\Sigma_1}), \dict{SharedSort}}},
	{\dict{subtwo}}{\app{\dict{partial}}{(\app{\dict{Sorts}}{\Sigma_2}), \dict{SharedSort}}},
    {\dict{supplone}}{\fa {s} \app{\dict{partialg}}{\dict{subone}, s} = \dict{None} \to \app{\dict{partialg}}{\dict{subtwo}, s} = \dict{None} \to \dict{Type}},
    {\dict{suppltwo}}{\fa {s} \app{\dict{partialg}}{\dict{subone}, s} = \dict{None} \to \app{\dict{partialg}}{\dict{subtwo}, s} = \dict{None} \to \dict{Type}},
	{\dict{SharedSymbols}}{\dict{Type}},
	{\dict{subfive}}{\app{\dict{partial}}{(\app{\dict{symbols}}{\Sigma_1}), \dict{SharedSymbols}}},
	{\dict{subsix}}{\app{\dict{partial}}{(\app{\dict{symbols}}{\Sigma_2}), \dict{SharedSymbols}}},
    {\dict{supplthree}}{\fa {\sigma} \app{\dict{partialg}}{\dict{subfive}, \sigma} = \dict{None} \to \app{\dict{partialg}}{\dict{subsix}, \sigma} = \dict{None} \to \listof{\dict{SharedSort}}},
    {\dict{supplfour}}{\fa {\sigma} \app{\dict{partialg}}{\dict{subfive}, \sigma} = \dict{None} \to \app{\dict{partialg}}{\dict{subsix}, \sigma} = \dict{None} \to \dict{UnitedSort}}
}
\medskip

The conditional typeclass for models is adjusted accordingly. With the no junk axioms no longer providing totality for the functions defined based upon them, and the newly supplied values requiring a new case, all such functions must be redefined as well. As an example, we present the \dict{params} field provided by such a newly defined \dict{extend} function:

\begin{equation*}
    \app{\dict{params}}{(\app{\dict{extend}}{G}), \sigma} \coloneqq
    \begin{cases}
        \map{(\app{\dict{partialf}}{(\app{\dict{subone}}{G})})}{(\app{\dict{params}}{\Sigma_1, \sigma'})} &\text{if }
        \begin{aligned} &\te {\sigma'} \app{\dict{partialg}}{(\app{\dict{subfive}}{G}), \sigma} = \breakrow\app{\dict{Some}}{\sigma'}\end{aligned} \\
        \map{(\app{\dict{partialf}}{(\app{\dict{subtwo}}{G})})}{(\app{\dict{params}}{\Sigma_2, \sigma'})} &\text{if }
        \begin{aligned} &\te {\sigma'} \app{\dict{partialg}}{(\app{\dict{subsix}}{G}), \sigma} = \breakrow\app{\dict{Some}}{\sigma'}\end{aligned} \\
        \app{\dict{supplthree}}{\sigma, H_1, H_2} &\text{if }
        \begin{aligned} &(H_1 : \app{\dict{partialg}}{(\app{\dict{subfive}}{G}), \sigma} = \breakrow\dict{None}) \land {} \\ &(H_2 : \app{\dict{partialg}}{(\app{\dict{subsix}}{G}), \sigma} = \breakrow\dict{None})\end{aligned} \\
    \end{cases}
\end{equation*}

All other cases of all such functions are defined in the same manner.

Finally, extending theories with an arbitrary axiom can be done by simply expressing it as a pattern in the combined signature, joining it to the existing theori(es) and providing separate satisfaction proofs for them as before. However, often these axioms are not arbitrary, but variations of existing axioms for the new sorts. These families of axioms can typically be proven the same way. By capturing these proof schemes, these specialized cases of theory extension can be done automatically.

This operation can be scrutinized for certain properties as well. For example, when performing two separate extensions, their order should not matter. Furthermore, extending a glued structure should be the same as gluing two identically extended structures (known as distributivity). In this approach, these properties must be proven separately.

\paragraph{Gluing-based solution.}

The alternative approach is reducing extension to gluing. The new sorts and symbols can be used to define a signature: the sorts can be joined to the existing ones as a disjoint union, as described at the beginning of \Cref{operators}, making the related properties trivial to prove; next, since all sorts are available, the parameter and return sorts of the new symbols can be provided in the usual way. Similarly, the functions defining the carrier sets and interpretations for the new elements can just be grouped into a model and glued as before.

There is no such shortcut for theories; they must be joined the same way in both approaches. The properties, however, are trivially provable as instances of the gluing properties: reordering extensions is just commutativity and associativity, while distributivity follows from those two and idempotence.
\end{addtoappendix}

\section{Related work}

The combination of matching logic models has been previously explored in the context of LTL model checking~\cite{LTLMC}. To combine the models, the authors also utilize extension and gluing; however, semantic consistency is guaranteed by restricting axioms to the ``open fragment'' of the logic, where patterns cannot quantify over elements outside their initial signature. This ensures that the validity of patterns is preserved when new elements are added to a model. However, the applicative, unsorted version of the logic~\cite{chen2021explained} utilized in this work~\cite{LTLMC} handles certain structural concepts in an extrinsic, axiomatic manner. For example, definedness and injection are introduced via dedicated symbols that are expected to be part of every signature and are treated specially.

This is in contrast to the polyadic, sorted matching logic that our work is based on and that handles sorted quantifiers natively. This eliminates the need for introducing auxiliary symbols and the concept of an open fragment, although working with the dependently typed, well-sorted syntax (utilizing heterogeneous-lists) becomes substantially more difficult.

The main drawback in the existing literature is that model gluing requires the models to have disjoint carrier sets, allowing them to handle overlapping elements by simply constructing the new model via a disjoint union. To overcome the disjointness limitation, we draw on frameworks from universal logic and category theory that explicitly handle the combination of logics sharing structural elements~\cite{decpres,meetcomb,algeblogic}.

To translate and lift patterns between local and combined signatures,~\cite{decpres} heavily utilizes two bijections that map formulas by replacing unknown structural elements with dedicated variables. This allows patterns to be lifted seamlessly into a greater combined language while preserving consistency. Defining a similar union for the models requires semantic combination. The model-theoretic counterpart to uniting theories over a shared overlap is the \emph{Amalgamation Property}~\cite{amalgam}, which ensures that two models can be embedded into a single structure without altering the behavior of their overlap.

\emph{Institution Theory}~\cite{hybridins} may be utilized to bridge semantic amalgamation with the syntactic union of signatures. Research in this area establishes that a \emph{pushout square} of signature morphisms (the syntactic union of signatures) corresponds directly to an \emph{amalgamation square} for models. This fundamental relationship provides the category-theoretical guarantee that when models are combined, they naturally satisfy the theories translated into the combined signature.

Rather than relying on abstract syntactic translations or enforcing disjoint carrier sets, we introduce a constructive, operational framework for these unions. We compute the union of types by explicitly handling overlaps via partial isomorphisms. In place of variable substitution, we utilize a partial order to lift patterns into a greater signature. Applying these universal algebraic tools to sorted, polyadic matching logic provides a novel method for combining non-disjoint signatures and corresponding models.

\section{Conclusion and future work}

In this paper, we have established a rigorous theoretical foundation for model composition within a polyadic, many-sorted variant of matching logic. We defined the systematic composition of signatures, variable valuations, and constituent models, and argued that essential satisfaction proofs are preserved within the resulting composite architecture. Furthermore, by formulating our definitions using dependent types, we have provided a blueprint for imminent machine-checked formalization and future automation.

The mechanisms described in this paper are lacking a machine-checked formalization. The immediate next step of this research is creating such an implementation using the existing formalization of the logic~\cite{kore-ml}, and machine-checking our results. Furthermore, we believe that given the right infrastructure, such as a global collection of known isomorphic types, it would be possible to create automation for instantiating the type classes described in this paper to create systematic model definitions suitable for given composite theories, potentially originating from the $\mathbb K$ framework.

\subsection*{Acknowledgments}

We would like to thank Máté Tejfel for his comments and advice in the paper writing process. This research was supported by a generous grant received from Pi Squared Inc. Its contents are solely the responsibility of the authors and do not necessarily represent the official views of the entities providing funding for said research.

\clearpage


\renderappendix
\end{document}